	\title{Volatility of Volatility and Leverage Effect from Options\footnote{We would like to thank the Editor, an Associate Editor, anonymous referees as well as participants at various conferences for many useful comments and suggestions.}}
	\author{Carsten H.\ Chong\thanks{Department of Information Systems, Business Statistics and Operations Management,
			The Hong Kong University of Science and Technology, e-mail: carstenchong@ust.hk}	\and Viktor Todorov\thanks{Department of Finance, Northwestern University, e-mail: v-todorov@kellogg.northwestern.edu}}
	\date{}
\newcommand{\R}{\mathbb{R}}
\newcommand{\N}{\mathbb{N}}
\newcommand{\E}{\mathbb{E}}
\newcommand{\F}{\mathbb{F}}
\renewcommand{\P}{\mathbb{P}}
\newcommand{\Q}{\mathbb{Q}}
\newcommand{\bone}{\mathbf 1}
\theoremstyle{plain}
\newtheorem{theorem}{Theorem}[section]
\newtheorem{lemma}[theorem]{Lemma}
\newtheorem{proposition}[theorem]{Proposition}
\newtheorem{corollary}[theorem]{Corollary}
\newtheorem{condition}{Condition}
\newtheorem{Assumption}{Assumption}
\renewcommand{\theAssumption}{\Alph{Assumption}}
\newcommand{\settheoremtag}[1]{
	\let\oldtheAssumption\theAssumption
	\renewcommand{\theAssumption}{#1}
	\g@addto@macro\endAssumption{
		\global\let\theAssumption\oldtheAssumption}
}
\theoremstyle{remark}
\newtheorem{remark}[theorem]{Remark}
\newcommand{\calu}{{\cal U}}
\newcommand{\calf}{{\cal F}}
\newcommand{\call}{{\cal L}}
\newcommand{\calg}{{\cal G}}
\newcommand{\calw}{{\mathcal{W}}}
\newcommand{\pf}{{\frak p}}
\newcommand{\qf}{{\frak q}}
\newcommand{\al}{{\alpha}}
\newcommand{\la}{{\lambda}}
\newcommand{\eps}{{\epsilon}}
\newcommand{\ga}{{\gamma}}
\newcommand{\vp}{{\varphi}}
\newcommand{\si}{{\sigma}}
\newcommand{\Si}{{\Sigma}}
\newcommand{\Om}{{\Omega}}
\newcommand{\ov}{\overline}
\newcommand{\un}{\underline}
\newcommand{\wh}{\widehat}
\newcommand{\wt}{\widetilde}
\newcommand{\bj}{{\boldsymbol{j}}}
\newcommand{\bz}{{\boldsymbol{z}}}
\newcommand{\uc}{\mathrm{uc}}
\newcommand{\Den}{\Delta_n}
\newcommand{\avar}{\mathrm{AVar}}
\newcommand{\bthm}{\begin{theorem}}
	\newcommand{\ethm}{\end{theorem}}
\newcommand{\bcor}{\begin{corollary}}
	\newcommand{\ecor}{\end{corollary}}
\newcommand{\blem}{\begin{lemma}}
	\newcommand{\elem}{\end{lemma}}
\newcommand{\bprop}{\begin{proposition}}
	\newcommand{\eprop}{\end{proposition}}
\newcommand{\bcond}{\begin{condition}}
	\newcommand{\econd}{\end{condition}}
\newcommand{\bdf}{\begin{definition}}
	\newcommand{\edf}{\end{definition}}
\newcommand{\bex}{\begin{example}}
	\newcommand{\eex}{\end{example}}
\newcommand{\brem}{\begin{remark}}
	\newcommand{\erem}{\end{remark}}
\newcommand{\bpr}{\begin{proof}}
	\newcommand{\epr}{\end{proof}}
\newcommand{\benu}{\begin{enumerate}}
	\newcommand{\eenu}{\end{enumerate}}
\newcommand{\beq}{\begin{equation}}
	\newcommand{\eeq}{\end{equation}}
\newcommand{\bit}{\begin{itemize}}
	\newcommand{\eit}{\end{itemize}}
\newcommand{\bass}{\begin{Assumption}}
	\newcommand{\eass}{\end{Assumption}}
\numberwithin{equation}{section}
\begin{document}

\maketitle

\begin{abstract}
\noindent We propose model-free (nonparametric) estimators of the volatility of volatility and leverage effect using high-frequency observations of short-dated options. At each point in time, we integrate available options into estimates of the conditional characteristic function of the price increment until the options' expiration and we use these estimates to recover spot volatility. Our volatility of volatility estimator is then formed from the sample variance and first-order autocovariance of the spot volatility increments, with the latter correcting for the bias in the former due to   option observation errors. The leverage effect estimator is the sample covariance between price increments and the estimated volatility increments. The rate of convergence of the estimators depends on the diffusive innovations in the latent volatility process as well as on the observation error in the options with strikes in the vicinity of the current spot price. Feasible inference is developed in a way that does not require prior knowledge of the source of estimation error that is asymptotically dominating.     
\end{abstract}

\bigskip

\noindent
{\em JEL Classification:}  C14, C22, C58, G12.

\bigskip

\noindent
{\em Keywords:}
Characteristic function; leverage effect; nonparametric estimation; options; volatility of volatility.

\section{Introduction}
	
Time-varying volatility is a ubiquitous feature of asset prices. Given the importance of volatility risk for investors, there is a lot of interest in trading derivative products written on the market volatility index VIX. Volatility of volatility is a major source of risk for traders of volatility derivatives. Related to that, there is a growing literature in finance that documents the emergence of volatility of volatility as a separate risk factor and its weak correlation with volatility risk, see e.g., \cite{agarwal2017volatility}, \cite{hollstein2018aggregate}, \cite{huang2019volatility} and \cite{chen2022volatility} among others. 

Another stylized feature of volatility is its tendency to move in the opposite direction to the asset price. Following \cite{black1976studies}, this negative covariance/correlation between asset return and asset volatility is referred to as leverage effect. The economic origins of the leverage effect have been extensively studied in earlier work in finance, see e.g., \cite{black1976studies}, \cite{christie1982stochastic}, \cite{french1987expected}, \cite{campbell1992no}, \cite{engle1993measuring}, \cite{bekaert2000asymmetric}, \cite{bollerslev2012volatility} and \cite{bollerslev2023jump}, among others.\footnote{Alternative discrete-time stochastic volatility models with leverage effect have been estimated by \cite{yu2005leverage}.}  

Since stochastic volatility is not directly observed, estimating the spot volatility of volatility and the spot leverage effect from asset returns is rather challenging. In this paper, we propose nonparametric estimators for these quantities using options written on the underlying asset. The proposed estimators can be used as diagnostic tools for stochastic volatility modeling. They can be also used for studying the asset pricing implications of volatility of volatility risk and for separating alternative economic explanations regarding the role of volatility risk in equilibrium asset pricing models as done in some of the work mentioned in the previous paragraph. 

One way to estimate the volatility of volatility and the leverage effect is to use high-frequency asset return data. More specifically, one can first construct local estimators of volatility and then form realized variance of variance and realized covariance between   price and   variance from these estimates, see e.g., \cite{vetter2015estimation}, \cite{sanfelici2015high}, \cite{clinet2021estimation}, \cite{li2022volatility} and \cite{toscano2022volatility} for volatility of volatility and \cite{WM14}, \cite{AFLWY17}, \cite{KX17} and \cite{yang2023estimation} for the leverage effect. The resulting estimators of integrated volatility of volatility and leverage effect, however, have a relatively slow rate of convergence. This reflects the difficulty of such an estimation problem. More specifically, if the asset prices do not contain market microstructure noise, the best attainable rate for integrated volatility of volatility and leverage effect is $n^{1/4}$, where $n$ is the number of high-frequency increments in the fixed time interval. If asset prices are contaminated with noise, this rate drops to $n^{1/8}$. Furthermore, if one is interested in spot instead of integrated volatility  of volatility and leverage effect, then the above rates drop  by another factor of one half. In addition, the results for the return-based volatility of volatility estimators are derived under various nontrivial restrictions for the underlying asset price, e.g., \cite{vetter2015estimation} does not allow for price and volatility jumps while \cite{li2022volatility} rule out volatility jumps.
	
This paper  proposes an alternative approach for volatility of volatility and leverage effect estimation that is based on short-dated options, i.e., options with very short time-to-maturity.\footnote{Options written on different underlying assets and expiring within a few weeks are actively traded on exchanges. This is facilitated by option exchanges offering the so-called weekly options, i.e., options that expire on a weekly basis, with the shortest ones expiring in a week from the issue date. Starting in 2022, the CBOE options exchange offers options on the S\&P 500 index that expire at the end of each trading day. In 2022, more than 60\% of the trading volume of  S\&P 500 index options was in options expiring within 7 calendar days.} 
The idea is to replace the local estimates of spot volatility from high-frequency returns with ones formed from short-dated options. As is well known, short-dated options can render the latent volatility directly observable and that makes the problem of nonparametric estimation of volatility of volatility and the leverage effect significantly easier.\footnote{For recovery of the spot volatility in a model-free way, we need the time-to-maturity of the options to be shrinking asymptotically.} Indeed, \cite{todorov2022information} document nontrivial gains from using option data for spot volatility estimation. This advantage naturally caries over to the case of estimating volatility of volatility and the leverage effect.\footnote{\cite{BFR23,BFR23b} also use short-dated options data to estimate the spot volatility of volatility and  the spot leverage effect (among other asset price characteristics), without  a proof of consistency or asymptotic normality of these estimators. This is done in a semiparametric setting in which price jumps are modeled parametrically and there are no volatility jumps. Our procedure, by contrast, is fully nonparametric and allows for volatility jumps (either positive or negative) in particular.}
	
In our inference procedures we use the spot volatility estimator of \cite{T19}, which is formed from option-based nonparametric estimates of the conditional characteristic function of   price increments. \cite{T19} shows that, for any fixed $u>0$, $\mathbb{E}_t[e^{iu(x_{t+T}-x_t)/\sqrt{T}}]$ can be used to recover nonparametrically the spot diffusive volatility of a general It\^o semimartingale process $x$ when $T$ is asymptotically shrinking.\footnote{In our application, we use $T$ of up to $16$ business days and the median highest $T$ on each day in our sample is only $6$ business days.} As usual, $\mathbb{E}_t$ in the above denotes conditional expectation given information up to time $t$. In \cite{CT23_a}, we derive a higher order expansion for $\mathbb{E}_t[e^{iu(x_{t+T}-x_t)/\sqrt{T}}]- \mathbb{E}_{t-\Delta}[e^{iu(x_{t+T}-x_{t-\Delta})/\sqrt{T+\Delta}}]$ when both $\Delta$ and $T$, with $0<\Delta<T$, are asymptotically shrinking. We use this result here to quantify the difference between the high-frequency increments of our option-based volatility estimator and the infeasible high-frequency increments of the latent spot diffusive volatility. This, in turn, can be used to estimate nonparametrically various quantities associated with the volatility dynamics, and in particular the volatility of volatility and the leverage effect. 

Our volatility of volatility estimator is based on sample variance and first-order autocovariance of the option-based spot volatility increments. The reason for using autocovariance in the estimation is to account for the effect of option observation errors. Option observation errors can be viewed as the natural counterpart to the market microstructure noise in the underlying asset price. Their presence introduces an upward bias in the sample variance of the volatility returns. The first-order autocovariance of the variance return estimates can correct for it. Turning next to the leverage effect, its estimator is constructed in an analogous way from the sample covariance between the price and the estimated volatility increments. 
	
We show consistency of the two estimators and derive a Central Limit Theorem (CLT) for them. The limit distribution is determined by two sources of error in the estimation procedure.  One is due to the diffusive innovations in the true (latent) volatility process. The other is due to the observation errors in the options with strikes in the vicinity of the current stock price. The size of the two errors is governed by the time gap between observations for the first one and the tenor and the strike gaps for the second one. All of these quantities are asymptotically shrinking and for our estimation procedure we do not require a condition on their relative size. 

The performance of the proposed estimators is evaluated on simulated data from an asset pricing model calibrated to match key features of observed stock and option data. Our Monte Carlo shows that the option-based estimators of volatility of volatility and the leverage effect  are significantly more efficient than their return-based counterparts. This is the case in spite of the fact that the latter are constructed using a much higher sampling frequency than the option based ones (five seconds versus one minute). In an empirical application, we compute the volatility of volatility and the leverage effect of the S\&P 500  market index using short-dated options written on the index. We use a log transform of the volatility when computing these quantities, i.e., our interest is volatility of log-volatility and covariance between log-price and log-volatility. Our nonparametric results show that volatility and volatility of volatility exhibit only weak dependence while the leverage effect estimates and the market variance are inversely related. 

On a theoretical level, the papers closest to the current work are \cite{andersen2015exploring} and \cite{KX17}. These papers propose volatility of volatility and leverage effect estimators using the VIX volatility index instead of the nonparametric spot volatility estimators employed here. There are two major differences between the estimators of these papers and ours. First, the VIX index is an estimate of the conditional risk-neutral expectation of one month ahead return quadratic variation. It has dynamics which is different from that of the latent spot diffusive volatility. Hence, the estimands of \cite{andersen2015exploring} and \cite{KX17} on one hand and of our paper on the other hand are different. Second, \cite{andersen2015exploring} and \cite{KX17} do not allow for option measurement error while we do. In fact, both the construction of our estimators and their limiting distributions are impacted by the presence of option observation errors.  
	
The rest of the paper is organized as follows. In Section~\ref{sec:setting}, we introduce our setting and define the objects of interest in the paper. Section~\ref{sec:options} describes the option observation scheme and the construction of volatility estimators from options. Our estimators and the asymptotic results for them are given in Section~\ref{sec:theory}.  Section~\ref{sec:mc} contains Monte Carlo evidence and Section~\ref{sec:emp} our empirical application.  Section~\ref{sec:concl} concludes. Some asymptotic expansion formulas for volatility estimators based on characteristic functions are stated in Appendix~\ref{sec:exp}, while the proofs of the theoretical results of this paper are given in Appendix~\ref{sec:proofs}.          
	
\section{Setting}\label{sec:setting}

The logarithm of the asset price is denoted by $x$ and is defined  on a filtered  space $(\Om,\calf,\F=(\calf_t)_{t\geq0})$, equipped with two probability measures, the true statistical probability measure $\P$ and the risk-neutral probability measure $\Q$. We assume that $x$ is an It\^o semimartingale under $\Q$ of the form
\begin{equation}\label{eq:x0}  
\begin{split}
x_t&=x_0+\int_0^t\al_s ds +\int_0^t \si_s dW_s +J^x_t, 
\end{split}
\end{equation}
where $W$ is a Brownian motion, $\al$ and $\si$ denote the drift and volatility of the asset price, respectively, and $J^x$ is the jump part of $x$ (which is given in \eqref{eq:jumps}). 

In this paper, we are interested in the volatility of the volatility process $\si$ and its components. Assuming that
\begin{equation}\label{eq:si0} \begin{split}
	\si_t&=\si_0+\int_0^t\al^\si_s ds + \int_0^t \si^\si_s dW_s + \int_0^t \ov\si^\si_s d\ov W_s +J^\si_t,
\end{split} 
\end{equation}
we develop feasible inference for $(\si^\si_t)^2+(\ov\si^\si_t)^2$ and $\si_t\si^\si_t$. The first of these two quantities is the spot diffusive variance of the process $\sigma$, while $\si_t\si^\si_t$ captures the covariance between diffusive price and volatility moves. In \eqref{eq:si0}, $\al^\si$, $\si^\si$ and $\ov\si^\si$ are the drift and diffusive coefficients of $\si$, respectively,  $\ov W$ is a Brownian motion independent of $W$, and $J^\si$ denotes the jump component of $\si$ (given in \eqref{eq:jumps}).

Our estimation procedure is based on options written on the underlying asset whose theoretical values are related to the dynamics of $x$ under the risk-neutral probability measure $\Q$, see equation (\ref{eq:opt}) below. In an arbitrage-free setting, the latter is locally equivalent to the statistical probability measure $\P$, and by Girsanov's theorem (see Theorem~III.3.24 in \cite{JS03}), both $x$ and $\si$ are also  It\^o semimartingales under $\P$, with representations of the form \eqref{eq:x0} and \eqref{eq:si0}, respectively. To be notationally precise, one should put superscripts $\P$ or $\Q$ on the drift, the jump part and the Brownian motions in \eqref{eq:x0} and \eqref{eq:si0}. In order to simplify notation, however, we shall refrain from doing so and instead mention the probability measure whenever we refer to \eqref{eq:x0} or \eqref{eq:si0}. Let us also stress at this point that the main quantities we are interested in here, namely, $\si$,  $\si^\si$ and $\ov\si^\si$, are the same under $\P$ and $\Q$.

As mentioned in the introduction and further explained in Section~\ref{sec:options} below, prices at time $t$ for options with time-to-maturity $T$ allow us to reconstruct the $\calf_t$-conditional characteristic function (under $\Q$) of the normalized price change from $t$ to $t+T$, that is, of
\begin{equation}\label{eq:call} 
\call_{t,T}(u)= \E_t[e^{iu(x_{t+T}-x_t)/\sqrt{T}}] = \E_t[e^{iu_T(x_{t+T}-x_t)}],
\end{equation} 
where $u_T=u/\sqrt{T}$ and $\E_t=\E^\Q[\cdot \mid \calf_t]$ denotes $\calf_t$-conditional expectation under $\Q$. Using the fact that $\call_{t,T}(u)=e^{-\frac12 u^2\si_t^2} + o(1)$ as $T\downarrow 0$ (under mild assumptions on $x$ and $\si$), \cite{T19} constructs an estimator of the spot variance $\si_t^2$   by setting
\begin{equation}\label{eq:vol-est} 
	\si^2_{t,T}(u)=-\frac{2}{u^2} \log \lvert \call_{t,T}(u)\rvert. 
\end{equation} 
Here, and in the remainder of the paper, we   use $o=o_p$ and $O=O_p$ to indicate order in probability. We further add the  superscript ``uc'' as in $O^\uc$ or $o^\uc$ to indicate uniformity in $u\in\calu$, where $\calu$ is an arbitrary compact subset of $(0,\infty)$. In order to remove biases of higher asymptotic order of this estimator,  \cite{todorov2021bias} introduce a bias-corrected version of \eqref{eq:vol-est} by considering a second time-to-maturity
 \begin{equation}\label{eq:Tprime} 
 	T'=\tau T,
 \end{equation}
 for some $\tau>1$ and defining
\begin{equation}\label{eq:vol-est-20} 
 \si^2_{t,T,T'}(u)=\frac{T'\si^2_{t,T}(u)-T\si^2_{t,T'}(u)}{T'-T}.
\end{equation}
The main sources of error of both $\si^2_{t,T}(u)$ and $\si^2_{t,T,T'}(u)$ are jump risks in price and volatility and the dynamics of their semimartingale characteristics. 

Our strategy of estimating spot volatility of volatility consists of forming sample variance and autocovariance of high-frequency increments of $\si^2_{t,T}(u)$ and $\si^2_{t,T,T'}(u)$, that is, of
\begin{equation}\label{eq:si-incr} \begin{split}
\Delta^n_i \si^2_{t,T}(u)	&=\si^2_{t^n_{i-1},T^n_{i-1}}(u)-\si^2_{t^n_i,T^n_i}(u),\\
\Delta^n_i \si^2_{t,T,T'}(u)&=\si^2_{t^n_{i-1},T^n_{i-1}, T^{\prime n}_{i-1}}(u)-\si^2_{t^n_{i},T^n_{i}, T^{\prime n}_{i}}(u),
\end{split}
\end{equation}
where with a slight abuse of notation we let
\begin{equation}\label{eq:tT} 
	t^n_i=t-i\Den,\quad T^n_i = T+i\Den,\quad T^{\prime n}_i = T'+i\Den,\quad i=1,\dots,k_n,
\end{equation} 
for some $\Den\to0$ and $k_n\to\infty$ with $k_n\Den\to0$. In \cite{CT23_a}, we have derived higher-order asymptotic expansions for $\Delta^n_i \si^2_{t,T}(u)$ and $\Delta^n_i \si^2_{t,T,T'}(u)$, as $\Den\to0$ and $T\to 0$. We will make advantage of these results, which we recall in  Appendix~\ref{sec:exp}, to develop our volatility of volatility and leverage effect estimators.
	
In practice, one is often interested in estimating volatility or variance of a \emph{transform} of volatility, that is, of $V_t=F(\si^2_t)$, where $F$ is a $C^2$-function on $(0,\infty)$. Typical functions of interest include $F(x)=x$ (volatility of variance), $F(x)=\sqrt{x}$ (volatility of volatility), $F(x)=\log x$ (volatility of log-variance) and $F(x)= \log \sqrt{x}$ (volatility of log-volatility). For any fixed $0\leq\underline t < t<\overline t<\infty$, since $\si$ is an It\^o semimartingale, $V$ is again an It\^o semimartingale on $[\underline t, \overline t]$ on the event $\{\inf_{s\in[\underline t,\overline t]}\si_s^2>0\}$. We are interested in estimating $VV_t$, the spot variance of $V$ at time $t$ and $LV_t$, the spot covariance of $V$ and the asset price. By It\^o's formula, we have
\begin{equation}\label{eq:VV} 
VV_t= 4\si_t^2( F'(\si_t^2))^2 ((\si^\si_t)^2+(\ov \si^\si_t)^2)~~\textrm{and}~~LV_t = 2\sigma_t^2F'(\si_t^2)\si^\si_t,
\end{equation}
which are the same under $\P$ and $\Q$. As natural but infeasible estimators of $V_t$, we consider
\begin{equation}\label{eq:V} 
V_{t,T}(u) = F(\si^2_{t,T}(u)),\qquad 	V_{t,T,T'}(u)=\frac{T'V_{t,T}(u)-TV_{t,T'}(u)}{T'-T},
\end{equation}
where $\si^2_{t,T}(u)$ is defined in \eqref{eq:vol-est}. Similarly, infeasible approximations of $\Delta^n_i V_t$ are given by
\begin{equation}\label{eq:incr-V} \begin{split}
		\Delta^n_i V_{t,T}(u)	&=V_{t^n_{i-1},T^n_{i-1}}(u)-V_{t^n_i,T^n_i}(u),\\
		\Delta^n_i V_{t,T,T'}(u)&=V_{t^n_{i-1},T^n_{i-1}, T^{\prime n}_{i-1}}(u)-V_{t^n_{i},T^n_{i}, T^{\prime n}_{i}}(u).
	\end{split}
\end{equation}

\section{Option-Based Volatility Estimators}\label{sec:options}
	
We proceed next with   constructing estimators of $V_{t,T}(u)$ and $V_{t,T,T'}(u)$ from options. If $x$ denotes the log-price of an asset, the conditional characteristic function of its  increments can be inferred from portfolios of short-dated options using an option-spanning result, see \cite{bakshi2000spanning} and \cite{CM01}. More specifically, assuming that the dividend yield associated with $x$ and the risk-free interest rate are both equal to zero (as their effect on short-dated options is negligible), we have
\begin{equation}\label{eq:spanning}
\call_{t,T}(u) = 1-\left(\frac{u^2}{T}+i\frac{u}{\sqrt{T}}\right)e^{-x_t}\int_{\mathbb{R}}e^{(iu/\sqrt{T}-1)(k-x_t)}O_{t,T}(k)dk,
\end{equation}
for $u\in\R$, where $O_{t,T}(k)$ denotes the price at time $t$ of an European style out-of-the-money option expiring at $t+T$ and with log-strike of $k$, that is, 
\begin{equation}\label{eq:opt} 
O_{t,T}(k)=\begin{cases} \E_t[(e^k- e^{x_{t+T}})\vee 0] &\text{if } k\leq x_t, \\ \E_t[( e^{x_{t+T}}-e^k)\vee 0] &\text{if } k> x_t. \end{cases}
\end{equation}
In agreement with the notation used so far, $\E_t$ signifies the $\calf_t$-conditional expectation under $\mathbb{Q}$.
We remind the reader that $O_{t,T}(k)$ is a put if $k\leq \log(F_{t,T})$ and a call otherwise, where $F_{t,T}$ is the time-$t$ futures price of the asset with expiration date $t+T$.
	
If the option prices $O_{t,T}(k)$ were continuously observable in $k$, then thanks to \eqref{eq:spanning}, the conditional characteristic function $\call_{t,T}(u)$, and hence $\si^2_{t,T}(u)$, would be statistics that one could  use to estimate $\si^2_t$, for example. In practice, there are two complications. First, $O_{t,T}(k)$ is only available on a  discrete log-strike grid, say, for
\begin{equation}\label{eq:logmoney}
\underline{k}_{t,T} ~\equiv~ k_{1,t,T} \, < \, k_{2,t,T} \, < \, \cdots \, < \, k_{N_{t,T},t,T}~\equiv~ \overline{k}_{t,T},\qquad N_{t,T}\in\mathbb{N}_+,
\end{equation}
which may be random and vary in $t$ and $T$. We denote the gap between consecutive log-strikes  by $\delta_{j,t,T} = k_{j,t,T} - k_{j-1,t,T}$, for $j=2,\dots,N_{t,T}$. Second, the observed option prices contain errors. That is, we only observe
\begin{equation}\label{eq:obs}
\widehat{O}_{t,T}(k_{j,t,T}) = O_{t,T}(k_{j,t,T})+\epsilon_{j,t,T},\qquad j=1,\dots,N.
\end{equation}
We assume that the errors $\epsilon_{j,t,T}$ are defined on an auxiliary space $(\Omega^{(1)},\mathcal{F}^{(1)})$ equipped with a transition probability $\mathbb{P}^{(1)}(\omega,d\omega^{(1)})$ from $\Omega$, the probability space on which $x$ is defined, to $\Omega^{(1)}$. We further define
\begin{equation}\label{eq:ext} 
\ov\Omega \,=\, \Omega\times \Omega^{(1)},\quad\ov{\mathcal{F}} \,=\, \mathcal{F} \otimes \mathcal{F}^{(1)},\quad\ov{\mathbb{P}}(d\omega,d\omega^{(1)}) \,=~ \mathbb{P}(d\omega) \, \mathbb{P}^{(1)}(\omega,d\omega^{(1)}).
\end{equation}   
Making a simple Riemann sum approximation of the integral in (\ref{eq:spanning}) using the available options, a feasible estimator of $\call_{t,T}(u)$ is now given by 
\begin{equation}\label{eq:L_hat}
\widehat{\mathcal{L}}_{t,T}(u) = 1 - \left(\frac{u^2}{T}+i\frac{u}{\sqrt{T}}\right)e^{-x_t}\sum_{j=2}^{N_{t,T}}e^{(iu/\sqrt{T}-1)(k_{j-1,t,T}-x_t)}\widehat{O}_{t,T}(k_{j-1,t,T})\delta_{j,t,T},
\end{equation}
for $u\in\R$. This in turn leads to feasible versions of the estimators from \eqref{eq:V} via
\begin{equation}\label{eq:V_hat} 
\wh V_{t,T}(u)=F(\widehat{\si}^2_{t,T}(u)),	\qquad \widehat{\si}^2_{t,T}(u) = -\frac{2}{u^2}\log|\widehat{\mathcal{L}}_{t,T}(u)|,
\end{equation}
and
\begin{equation}\label{eq:V_12}
\wh V_{t,T,T'}(u)= \frac{T'\widehat{V}_{t,T}(u) - T\widehat{V}_{t,T'}(u)}{T'-T}.
\end{equation}
	
For $\wh\call_{t,T}(u)$ to be a sufficiently good approximation of $\call_{t,T}(u)$, we need several assumptions concerning the existence of conditional moments of $x$ under $\Q$, the option observation scheme as well as the observation errors. They are quite similar to those employed in \cite{T19} and \cite{todorov2021bias}.   In the following, if expectation is taken under $\mathbb{Q}$, we will not use superscript in the notation; if expectation is under $\P$ or $\ov \P$, we put superscripts to signify this.
	
\bass\label{ass:C} The observed option prices are defined on $(\ov\Om,\ov\calf,\ov\P)$ from \eqref{eq:ext} and satisfy \eqref{eq:obs}. Moreover, there exists an $\F$-adapted process  $C_t$  with c\`{a}dl\`{a}g paths such that the following holds:
	\benu
	\item  For all $0<t<u<\infty$, we have that
		\begin{equation}\label{a3:1}
	\E_t\biggl[\al_u^4+\sigma_u^6+e^{4\lvert x_u\rvert}+\biggl(\int_\R \bigl[(e^{3\lvert z\rvert}-1)\vee \lvert z\rvert^2\bigr] \la(dz)\biggr)^4\biggr]<C_t.
	\end{equation}
	\item The number of strikes $N_{t,T}$ and the log-strike grid $\{k_{j,t,T}\}_{j=1}^{N_{t,T}}$ are $\mathcal{F}_{{t}}$-measurable  and  
	\begin{equation}\label{a4:1}
		C^{-1}_{{t}}\delta\leq \delta_{j,t,T}\leq C_{{t}}\delta,\qquad j=2,\dots, N_{t,T},
	\end{equation}
	for a deterministic sequence $\delta=\delta(T)$. Moreover, for any $\tau>0$,
	\[ \sup_{j:\lvert k_{j,t,\tau T}-x_t\rvert < C_t^{-1}}  \lvert \delta_{j,t,\tau T}/\delta-\rho_{t,\tau}(k_{j-1,t,\tau T}-x_t) \rvert \stackrel{\P}{\longrightarrow} 0, \]
	where $t\mapsto \rho_{t,\tau}(k)$ is continuous in probability (uniformly in $k$), $\F$-adapted, continuous in $\tau>0$ (uniformly in $k$ and locally uniformly in $t$) and continuous in $k=0$ (locally uniformly in $\tau$ and $t$).
	\item In the notation of \eqref{eq:tT} and \eqref{eq:Tprime},
	\[ \liminf_{T\to0} \frac{\inf_{n\in\N, i=1,\dots,k_n} ( \lvert\underline  k_{t^n_i,T^n_i} \rvert\wedge\lvert \underline k_{t^n_i,T^{\prime n}_i}\rvert\wedge \overline  k_{t^n_i,T^n_i}  \wedge  \overline k_{t^n_i,T^{\prime n}_i}  )}{(\delta/\sqrt{T})^\iota} = \infty  \]
	for some $\iota>0$.
	\item For $t,\tau>0$ and $j=1,\dots, N_{t,\tau T}$, we have
	\begin{equation}\label{eq:epsilon} 
		\epsilon_{t,\tau T}(k_{j,t,\tau T}) = \zeta_{t,\tau}(k_{j,t,\tau T}-x_t)O_{t,\tau T}(k_{j,t,\tau T})\overline{\epsilon}_{j,t,\tau T},
	\end{equation}
	where $t\mapsto \zeta_{t,\tau}(k)$ is continuous in probability (uniformly in $k$), $\F$-adapted, continuous in $\tau>0$ (uniformly in $k$ and locally uniformly in $t$) and continuous in $k=0$ (locally uniformly in $\tau$ and $t$) and $\overline{\epsilon}_{j,t,T}$ is   $\ov\calf$-measurable, independent of $\calf$ under $\ov\P$ and i.i.d.\ as $j$, $t$ and $T$ vary. Moreover,
	\begin{equation}\label{eq:mom} 
		\mathbb{E}^{\ov \P}[\overline{\epsilon}_{j,t,T}\mid \mathcal{F}]= 0,\quad \mathbb{E}^{\ov \P}[(\overline{\epsilon}_{j,t,T})^2\mid \mathcal{F}] = 1,\quad \mathbb{E}^{\ov \P}[| \overline{\epsilon}_{j,t,T}|^{p}\mid\mathcal{F}] <\infty\quad \text{for all } p>2.
	\end{equation} 
	\eenu
	\eass

\section{Volatility of Volatility and Leverage Effect Estimators}\label{sec:theory}

We are now ready to introduce our estimators. Under Assumption~\ref{ass:C}, we can use the observed option prices to build reasonable estimators of $\call_{t^n_i,T^n_i}(u)$ and $\call_{t^n_i,T^{\prime n}_i}(u)$; see \cite{T19}. If we further have Assumptions \ref{ass:main} and \ref{ass:main-1} stated in the Appendix, Theorem~\ref{thm:main0} shows that $\Delta^n_i V_{t,T}(u)$ and $\Delta^n_i V_{t,T,T'}(u)$ provide good approximations of $\Delta^n_i V_t$. These two results combined can be used to construct estimators of $VV_t$ and $LV_t$. 

We start with  volatility of volatility. Our estimators of $VV_t$ are based on the sample variance and first-order autocovariance of these approximations of $\Delta^n_i V_t$. More precisely, given a truncation function $\tau_n(x)=x\bone_{\{\lvert x\rvert\leq \upsilon_n\}}$ where $\upsilon_n>0$ is some truncation level, we consider for $u>0$ the following volatility of volatility estimators:
\begin{equation}\label{eq:vov}\begin{split}
\wh{VV}^n_{t,T}(u)&=\frac{1}{k_n\Den} \sum_{i=2}^{k_n}\Bigl((\tau_n(\Delta_i^n\widehat{V}_{t,T}(u)))^2 + 2\tau_n(\Delta_{i-1}^n\widehat{V}_{t,T}(u))\tau_n(\Delta_i^n\widehat{V}_{t,T}(u))\Bigr), \\
\wh{VV}^n_{t,T,T'}(u)&=\frac{1}{k_n\Den} \sum_{i=2}^{k_n}\Bigl((\tau_n(\Delta_i^n\widehat{V}_{t,T,T'}(u)))^2 + 2\tau_n(\Delta_{i-1}^n\widehat{V}_{t,T,T'}(u))\tau_n(\Delta_i^n\widehat{V}_{t,T,T'}(u))\Bigr),
\end{split}\end{equation}
where, recalling the notation in \eqref{eq:tT}, \eqref{eq:V_hat} and \eqref{eq:V_12}, we define
	\[ \Delta_i^n\widehat{V}_{t,T}(u)=\wh V_{t^n_{i-1},T^n_{i-1}}(u)-\wh V_{t^n_i,T^n_i}(u),\quad \Delta_i^n\widehat{V}_{t,T,T'}(u)=\wh V_{t^n_{i-1},T^n_{i-1},T^{\prime n}_{i-1}}(u)-\wh V_{t^n_i,T^n_i,T^{\prime n}_i}(u).\]
The reason we include the first-order autocovariance for the volatility of volatility estimators is because this automatically corrects for a bias caused by the option observation errors. 

Turning next to the leverage effect, our estimators  are based on sample covariances involving the price increments and the increments of the volatility estimates:
\begin{equation}\label{eq:lev}\begin{split}
\wh{LV}^n_{t,T}(u)& = \frac{1}{k_n\Den} \sum_{i=1}^{k_n}\tau_n(\Delta_i^nx_t)\tau_n(\Delta_i^n\widehat{V}_{t,T}(u)),\\
\wh{LV}^n_{t,T,T'}(u) &= \frac{1}{k_n\Den} \sum_{i=1}^{k_n}\tau_n(\Delta_i^nx_t)\tau_n(\Delta_i^n\widehat{V}_{t,T,T'}(u)).\end{split}
\end{equation}
Note that here, unlike the case of estimating $VV_t$, we do not need to make corrections for the option observation error as the latter does not introduce an asymptotic bias in $\wh{LV}^n_{t,T}(u)$ and $\wh{LV}^n_{t,T,T'}(u)$.

In order to state the theoretical results of this paper, we need one more set of assumptions, which relate $\Delta_n$, $T$, $k_n$, the strike grid size $\delta$ and the truncation threshold $\upsilon_n$ to each other.
\begin{Assumption}\label{ass:hf} ~
	We have $\Den\to0$, $T\to 0$,  $k_n\to \infty$, $\delta\to0$ and $\upsilon_n >0$ in such a way that for some $N\geq3$, some $\phi\in[0,1]$ and some $\iota>0$,
		\begin{equation}\label{eq:rel} \begin{split}
				&\frac{k_nT^{N}}{\Den} \to 0,\quad k_nT=O(1),\quad  k_n\Den=O(T),\quad k_n^2\Den\to0,\quad \frac{k_n\delta\log T}{\sqrt{T}}\to0,\\
				&\frac{\Den}{\Den+\delta/\sqrt{T}}\to \phi,\quad	\Den=O(\upsilon_n^{2r}),\quad \frac{\Den+\delta/\sqrt{T}}{\upsilon_n^2} = O(k_n^{-\iota}).
			\end{split}
		\end{equation} 
	Furthermore, for the convergence of $\wh{VV}^n_{t,T}(u)$ in \eqref{eq:main-conv-alt} as well as the convergence of  
	$\wh{LV}^n_{t,T}(u)$ in    \eqref{eq:main-conv-22} below, we further assume that $k_nT\to0$.
\end{Assumption}

We briefly comment on the various rate conditions in the above assumption. The first line of (\ref{eq:rel}) contains conditions that guarantee that the various biases in the recovery of the volatility increment are of higher asymptotic order relative to the rate of convergence in the CLT for the volatility of volatility and leverage effect estimators. The first of them is rather weak when $N$ is high. The second one is due to the biases of exact order $\sqrt{\Den}T$ that arise in an expansion of $\Delta^n_i V_{t,T}(u)$ (see the first line in (\ref{eq:incr-V-2})). If we use the bias-corrected estimators based on two maturities, the required condition is weaker than the one needed for estimators based on a single maturity, i.e., $k_nT=O(1)$ versus $k_nT=o(1)$. The third condition requires the length of the estimation window  to be not larger asymptotically than the time-to-maturity of the options, which will be the case in applications. 

The fourth condition makes sure that $VV_t-\frac{1}{k_n\Den}\int_{t-k_n\Den}^t VV_s ds$ and $LV_t-\frac{1}{k_n\Den}\int_{t-k_n\Den}^t LV_s ds$ (i.e., the biases arising from the difference between spot volatility of volatility / leverage effect and local averages thereof)  are asymptotically negligible. Finally, the last condition in the first line of (\ref{eq:rel}) is due to the Riemann approximation error of the integral in (\ref{eq:spanning}). This error is typically small for applications such as the one considered in our empirical analysis. Next, the first condition in the second line of (\ref{eq:rel}) is a balance condition between the size of the error in the estimation due to the diffusive component of the process $\sigma^2$ and the one due to the option observation error.\footnote{The rate of convergence of $\widehat{V}_{t,T}(u)$ for a fixed $t$ is $T^{1/4}/\sqrt{\delta}$, see \cite{T19}. } We note that we allow $\phi$ to take both values of $0$ and $1$, that is, we allow either of these two sources of error to dominate the other one. Finally, the requirements for the threshold $\upsilon_n$ in (\ref{eq:rel}) are mild. Indeed, 
$\upsilon_n=\infty$ (i.e., no truncation) is permitted.

A feasible CLT for the volatility of volatility estimators is given in the following theorem.	
\begin{theorem}\label{thm:vov}	Let $F:(0,\infty)\to\R$ be  a $C^2$-function and $\tau_n(x)=x\bone_{\{\lvert x\rvert\leq \upsilon_n\}}$.
		Suppose that the log-price process $x$, the volatility process $\si$, the observed option prices and the sequences $k_n$, $\Den$, $T$, $\delta$ and $\upsilon_n$ satisfy Assumptions~\ref{ass:C} and \ref{ass:hf} as well as Assumptions \ref{ass:main} and \ref{ass:main-1} from Appendix~\ref{sec:exp} (with the same $N\geq3$ as in Assumptions~\ref{ass:hf} and \ref{ass:main}).
	Further define
		\begin{equation}\label{eq:avar0} 
				\wh{\avar}^n_{t,T}(u)=	\wh{\avar}^{n,0}_{t,T}(u)+2\wh{\avar}^{n,1}_{t,T}(u),\quad \wh{\avar}^n_{t,T,T'}(u)=\wh{\avar}^{n,0}_{t,T,T'}(u)+2\wh{\avar}^{n,1}_{t,T,T'}(u),
		\end{equation}
	where
		\begin{equation}\label{eq:avar}\begin{split}
				\wh{\avar}^{n,0}_{t,T}(u)&=\frac{1}{k_n\Den^2}\Biggl(\sum_{i=2}^{k_n}\bigl(q^n_i(\wh V_{t,T}(u))\bigr)^2-\sum_{i=4}^{k_n}q^n_{i-2}(\wh V_{t,T}(u))q^n_i(\wh V_{t,T}(u))\Biggr),\\
				\wh{\avar}^{n,1}_{t,T}(u)&=\frac{1}{k_n\Den^2}\Biggl(\sum_{i=3}^{k_n} q^n_{i-1}(\wh V_{t,T}(u))q^n_i(\wh V_{t,T}(u))- \sum_{i=4}^{k_n}q^n_{i-2}(\wh V_{t,T}(u))q^n_i(\wh V_{t,T}(u))\Biggr), \\
				\wh{\avar}^{n,0}_{t,T,T'}(u)	&=\frac{1}{k_n\Den^2}\Biggl(\sum_{i=2}^{k_n}\bigl(q^n_i(\wh V_{t,T,T'}(u))\bigr)^2-\sum_{i=4}^{k_n}q^n_{i-2}(\wh V_{t,T,T'}(u))q^n_i(\wh V_{t,T,T'}(u))\Biggr),\\
				\wh{\avar}^{n,1}_{t,T,T'}(u)&=\frac{1}{k_n\Den^2}\Biggl(\sum_{i=3}^{k_n} q^n_{i-1}(\wh V_{t,T,T'}(u))q^n_i(\wh V_{t,T,T'}(u))- \sum_{i=4}^{k_n}q^n_{i-2}(\wh V_{t,T,T'}(u))q^n_i(\wh V_{t,T,T'}(u))\Biggr)
		\end{split}\raisetag{5\baselineskip}\end{equation}
		and
		\begin{align*}
			q^n_i(\wh V_{t,T}(u)) &= (\tau_n(\Delta_i^n\widehat{V}_{t,T}(u)))^2 + 2\tau_n(\Delta_{i-1}^n\widehat{V}_{t,T}(u))\tau_n(\Delta_i^n\widehat{V}_{t,T}(u)), 	 \\
			q^n_i(\wh V_{t,T,T'}(u)) &= (\tau_n(\Delta_i^n\widehat{V}_{t,T,T'}(u)))^2 + 2\tau_n(\Delta_{i-1}^n\widehat{V}_{t,T,T'}(u))\tau_n(\Delta_i^n\widehat{V}_{t,T,T'}(u)).
		\end{align*} 
		Then, for any $0<\underline t<t<\overline t<\infty$ and $u>0$, the estimators in \eqref{eq:vov} satisfy
			\begin{equation}\label{eq:main-conv-alt}  \begin{split}
				\sqrt{\frac{k_n}{\wh{\avar}^n_{t,T}(u)}}\bigl(\wh{VV}^n_{t,T}(u)-VV_t\bigr)&\xrightarrow{\mathcal{L}-s} N(0,1),\\	\sqrt{\frac{k_n}{\wh{\avar}^n_{t,T,T'}(u)}}\bigl(\wh{VV}^n_{t,T,T'}(u)-VV_t\bigr)&\xrightarrow{\mathcal{L}-s} N(0,1),\end{split}
		\end{equation}
		on  the set $\{\inf_{s\in[\underline t,\overline t]} \si_s^2 >0\}$,
		where $VV_t$ is given by \eqref{eq:VV} and the $N(0,1)$ limit variable is defined on a product extension of $(\ov \Om,\ov\calf,\ov\P)$  and is independent from it. 
	\end{theorem}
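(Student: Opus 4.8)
The plan is to separate the statement into two parts: first establish an unnormalized stable central limit theorem of the form $\sqrt{k_n}\bigl(\wh{VV}^n_{t,T}(u)-VV_t\bigr)\xrightarrow{\mathcal{L}-s}\sqrt{\avar_t}\,Z$, where $Z\sim N(0,1)$ is defined on a product extension of $(\ov\Om,\ov\calf,\ov\P)$ and independent of $\ov\calf$, and $\avar_t$ is an $\ov\calf$-measurable limiting variance; and second show that the data-driven quantity $\wh{\avar}^n_{t,T}(u)$ from \eqref{eq:avar0}--\eqref{eq:avar} satisfies $\wh{\avar}^n_{t,T}(u)\stackrel{\P}{\longrightarrow}\avar_t$. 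Given these, the self-normalized statement \eqref{eq:main-conv-alt} follows from the standard calculus of stable convergence: a mixed-normal limit divided by the square root of a consistent estimate of its conditioning variance converges to a standard normal that is independent of the conditioning $\sigma$-field. The two-maturity case is handled identically, replacing $\wh V_{t,T}(u)$ by $\wh V_{t,T,T'}(u)$ throughout; the only substantive difference is that the bias correction in \eqref{eq:vol-est-20} weakens the required rate from $k_nT\to0$ to $k_nT=O(1)$.

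First I would localize, replacing $\al,\si,\si^\si,\ov\si^\si$ and the bounding process $C_t$ by bounded versions and working on $\{\inf_{s\in[\underline t,\overline t]}\si_s^2>0\}$; this is justified by the usual stopping-time argument and lets me apply the expansions of Appendix~\ref{sec:exp} uniformly. Then, invoking Theorem~\ref{thm:main0} and the expansion \eqref{eq:incr-V-2}, I would write, for each $i$,
\[ \Delta_i^n\wh V_{t,T}(u)=\Delta_i^n V_t+(\xi_{i-1}-\xi_i)+r_i, \]
where $\Delta_i^n V_t$ is the true increment of $V$ (whose continuous-martingale part has $\calf_{t^n_i}$-conditional variance asymptotically $VV_{t^n_i}\Den$), the $\xi_i$ are the option-observation errors in $\wh V_{t^n_i,T^n_i}(u)$, which by Assumption~\ref{ass:C}(4) are conditionally on $\calf$ independent across $i$ with variance of order $\delta/\sqrt{T}$, and $r_i$ collects drift, jump and higher-order bias contributions. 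The rate conditions in the first line of \eqref{eq:rel} (together with $k_nT\to0$ in the single-maturity case) are precisely what push every term in $r_i$, as well as the local-averaging bias $VV_t-\frac1{k_n\Den}\int_{t-k_n\Den}^t VV_s\,ds$, below the CLT order $k_n^{-1/2}$.

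Next I would treat truncation: since volatility jumps and atypically large error realizations are removed by $\tau_n$, the moment bounds \eqref{eq:mom} and the threshold conditions in the second line of \eqref{eq:rel} let me replace $\tau_n(\Delta_i^n\wh V_{t,T}(u))$ by its continuous-plus-error part at cost $o_p(k_n^{-1/2})$. This reduces the estimator to $\frac1{k_n\Den}\sum_{i=2}^{k_n}q_i^n$ with $q_i^n=\chi_i^2+2\chi_{i-1}\chi_i$ and $\chi_i=(\Delta_i^n V_t)^c+(\xi_{i-1}-\xi_i)$. Here the autocovariance term plays its decisive role: the diffusive cross-products $(\Delta_{i-1}^n V_t)^c(\Delta_i^n V_t)^c$ are approximately uncorrelated and mean zero, while the pure-error part of $q_i^n$ rearranges to $\xi_i^2-\xi_{i-1}^2+2\xi_{i-2}(\xi_{i-1}-\xi_i)$, whose first piece telescopes to a negligible boundary term and whose conditional mean vanishes. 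Thus the $2\tau_n(\Delta_{i-1}^n\wh V)\tau_n(\Delta_i^n\wh V)$ correction exactly cancels the $O(\delta/\sqrt{T})$ bias that the squared differenced errors $(\xi_{i-1}-\xi_i)^2$ would otherwise inject, which is why no explicit de-biasing of the noise variance is needed.

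The heart of the argument is a stable CLT for $\frac1{\sqrt{k_n}\Den}\sum_{i=2}^{k_n}(q_i^n-\E[q_i^n\mid\cdot])$. I would cast the centered summands as a martingale-difference array with respect to a filtration interleaving the price/volatility information and the auxiliary error space, and verify the conditions of a stable martingale CLT (e.g.\ the version in \cite{JS03}): the sum of conditional variances must converge to the $\ov\calf$-measurable limit $\avar_t$, the conditional covariations with $W$ and $\ov W$ must vanish (yielding independence of the limit from $\ov\calf$), and a conditional Lyapunov bound must hold. The limit $\avar_t$ splits into a diffusive contribution of the form $6\,VV_t^2$ arising from the fourth moments of the Gaussian volatility innovations, an observation-error contribution from the $\xi_{i-2}(\xi_{i-1}-\xi_i)$ products, and cross terms, the three being weighted through the balance parameter $\phi$ of \eqref{eq:rel}. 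Because $\phi$ is permitted to take any value in $[0,1]$, the array genuinely mixes two heterogeneous noise sources of comparable magnitude; verifying that their joint conditional variance converges while tracking the cancellations produced by the autocovariance correction is the main technical obstacle. Finally I would prove $\wh{\avar}^n_{t,T}(u)\stackrel{\P}{\longrightarrow}\avar_t$ by the same moment computations, matching $\E[(q_i^n)^2-q_{i-2}^nq_i^n]$ to the diffusive variance $6(VV_t\Den)^2$ and $\E[q_{i-1}^nq_i^n-q_{i-2}^nq_i^n]$ to the error-induced autocovariances, so that $\wh{\avar}^{n,0}+2\wh{\avar}^{n,1}$ reproduces $\avar_t$ term by term irrespective of $\phi$; this last property is exactly what delivers feasible inference without prior knowledge of the dominating error source.
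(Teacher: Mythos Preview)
Your overall strategy matches the paper's: decompose $\Delta_i^n \wh V_{t,T}(u)$ into the true volatility increment plus an observation-error increment plus a bias remainder, handle truncation separately, and run a martingale-array stable CLT in which the autocovariance term kills the noise-squared bias. The paper packages the core CLT into an auxiliary result (Proposition~\ref{prop:auto}) about a generic It\^o semimartingale observed with shrinking noise, and then verifies that the option setting meets its hypotheses by Taylor-expanding $F(-\tfrac{2}{u^2}\log|\cdot|)$ and bounding the Riemann-sum residual via Lemma~\ref{lemma:bounds}; this is a presentational difference, not a substantive one.

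There is, however, a concrete gap in your two-step plan. You propose to show $\sqrt{k_n}\bigl(\wh{VV}^n_{t,T}(u)-VV_t\bigr)\xrightarrow{\mathcal{L}-s}\sqrt{\avar_t}\,Z$ with a \emph{finite} $\ov\calf$-measurable $\avar_t$, and separately $\wh{\avar}^n_{t,T}(u)\stackrel{\P}{\to}\avar_t$. This works only when $\phi>0$. If $\phi=0$ (i.e., $\delta/\sqrt{T}\gg\Den$), the observation-error contribution dominates, the actual rate is $\sqrt{k_n}\,\Den/(\delta/\sqrt{T})$ rather than $\sqrt{k_n}$, and both $\sqrt{k_n}\bigl(\wh{VV}^n_{t,T}(u)-VV_t\bigr)$ and $\wh{\avar}^n_{t,T}(u)$ diverge. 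Your ``feasible inference without prior knowledge of the dominating error source'' therefore does not follow from the plan as stated. The paper's remedy is to use the adaptive normalization $\sqrt{k_n}\,\Den/(\Den+\delta/\sqrt{T})$ in the CLT and, correspondingly, $1/\bigl(k_n(\Den+\delta/\sqrt{T})^2\bigr)$ in the variance estimator; both limits are then finite and nondegenerate for every $\phi\in[0,1]$, and the factor $(\Den+\delta/\sqrt{T})/\Den$ cancels in the self-normalized statistic. Concretely, with $\delta_n=\delta/\sqrt{T}$ one has $\wh{\avar}^n_{t,T}(u)=\bigl((\Den+\delta_n)/\Den\bigr)^2(\wh\Si^n_0+2\wh\Si^n_1)$, where $\wh\Si^n_0+2\wh\Si^n_1$ is the $(\Den+\delta_n)^{-2}$-normalized object that \emph{does} converge (to $\Sigma_0(\phi)+2\Sigma_1(\phi)$ in the paper's notation), and the combination recovers \eqref{eq:main-conv-alt} uniformly in $\phi$. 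Once you insert this adaptive rate, the rest of your outline goes through essentially verbatim.
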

The corresponding CLT result for the leverage effect estimators is given in the next theorem.
\begin{theorem}\label{thm:lev} In the same set-up as in Theorem~\ref{thm:vov}, if we define
	\begin{equation}\label{eq:avar2}\begin{split}
			\wh{\avar}^{\prime n}_{t,T}(u)	&=\frac{1}{k_n\Den^2} \sum_{i=1}^{k_n} (\tau_n(\Delta^n_i \wh V_{t,T}(u)))^2(\tau_n(\Delta^n_i x_t))^2 \\
			&\quad-\frac{1}{k_n\Den^2} \sum_{i=3}^{k_n}\tau_n(\Delta^n_i \wh V_{t,T}(u))\tau_n(\Delta^n_{i}x_t)\tau_n(\Delta^n_{i-2} \wh V_{t,T}(u))\tau_n(\Delta^n_{i-2}x_t),\\
			\wh{\avar}^{\prime n}_{t,T,T'}(u) &=\frac{1}{k_n\Den^2} \sum_{i=1}^{k_n} (\tau_n(\Delta^n_i \wh V_{t,T,T'}(u)))^2(\tau_n(\Delta^n_i x_t))^2 \\
			&\quad-\frac{1}{k_n\Den^2} \sum_{i=3}^{k_n}\tau_n(\Delta^n_i \wh V_{t,T,T'}(u))\tau_n(\Delta^n_{i}x_t)\tau_n(\Delta^n_{i-2} \wh V_{t,T,T'}(u))\tau_n(\Delta^n_{i-2}x_t),
	\end{split}\end{equation}
	then
	\begin{equation}\label{eq:main-conv-22} \begin{split}
		\sqrt{\frac{k_n}{\wh{\avar}^{\prime n}_{t,T}(u)}}\bigl(\wh{LV}^n_{t,T}(u)-LV_t\bigr)&\xrightarrow{\mathcal{L}-s} N(0,1),\\	\sqrt{\frac{k_n}{\wh{\avar}^{\prime n}_{t,T,T'}(u)}}\bigl(\wh{LV}^n_{t,T,T'}(u)-LV_t\bigr)&\xrightarrow{\mathcal{L}-s} N(0,1),\end{split}\end{equation}
	where $LV_t$ was defined in \eqref{eq:VV} and the $N(0,1)$ limit variable is defined on a product extension of $(\ov \Om,\ov\calf,\ov\P)$  and is independent from it. 	
\end{theorem}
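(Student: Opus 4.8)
The plan is to mirror the proof of Theorem~\ref{thm:vov}, replacing the quadratic functional of the volatility increments by the bilinear functional of price and volatility increments. Throughout I work on the set $\{\inf_{s\in[\un t,\ov t]}\si_s^2>0\}$, where $V=F(\si^2)$ is a genuine It\^o semimartingale, and I discard the jump parts of $x$ and $\si$, which are killed by $\tau_n$ under the threshold conditions in (\ref{eq:rel}) (in particular $\Den=O(\upsilon_n^{2r})$ and the bound on $(\Den+\delta/\sqrt T)/\upsilon_n^2$) together with the moment bounds (\ref{a3:1}). The starting point is the expansion of $\Delta^n_i\wh V_{t,T}(u)$ recalled in Appendix~\ref{sec:exp}. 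Writing $\Delta^n_i\wh V_{t,T}(u)=\Delta^n_iV_t+R^{\mathrm{err}}_i+R^{\mathrm{bias}}_i$, Theorem~\ref{thm:main0} isolates the latent diffusive increment $\Delta^n_iV_t$ (of order $\sqrt{\Den}$), an option-error term $R^{\mathrm{err}}_i$ of order $\sqrt\delta/T^{1/4}$ built from the $\ov\epsilon_{j,\cdot}$, and a remainder $R^{\mathrm{bias}}_i$ collecting the drift, jump and Riemann contributions; the first line of (\ref{eq:rel}) makes the remainder higher order than the $\sqrt{k_n}$ rate. I then replace $\tau_n(\Delta^n_ix_t)$ by its continuous part $\si_{t^n_i}\Delta^n_iW$ up to negligible drift and jump terms.

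The analysis splits into a law-of-large-numbers part and a stable central-limit part. For the former, the $\calf_{t^n_i}$-conditional expectation of $\tau_n(\Delta^n_ix_t)\Delta^n_iV_t$ equals $LV_{t^n_i}\Den$ up to higher-order terms, because only the common Brownian factor $W$ (through $\si^\si$ in (\ref{eq:si0}) and the chain rule for $V=F(\si^2)$) contributes, reproducing $LV_t=2\si_t^2F'(\si_t^2)\si^\si_t$ from (\ref{eq:VV}), while the orthogonal factor $\ov W$ and the error have conditional mean zero; averaging over $i$ with $k_n\Den\to0$ and continuity of $s\mapsto LV_s$ gives convergence of the centering to $LV_t$. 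Crucially, the option error contributes \emph{no} bias: by (\ref{eq:mom}) the variable $\ov\epsilon$ is mean zero and independent of $\calf$ under $\ov\P$, so $\tau_n(\Delta^n_ix_t)R^{\mathrm{err}}_i$ has vanishing $\ov\P$-conditional mean, in contrast to the squared error driving $\wh{VV}^n$. This is the formal reason no autocovariance correction appears in (\ref{eq:lev}).

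For the fluctuations I apply the same stable CLT for triangular martingale-difference arrays used in Theorem~\ref{thm:vov} (the version in \cite{JS03}). Since the summands are built from increments over disjoint intervals and from across-$i$ independent errors, after centering they form an asymptotic martingale-difference array; I check the Lyapunov condition using the $p>2$ moments in (\ref{eq:mom}) and the bounds (\ref{a3:1}), and compute the predictable quadratic variation. This yields $\sqrt{k_n}(\wh{LV}^n_{t,T}(u)-LV_t)\xrightarrow{\mathcal{L}-s}\sqrt{\avar'_t}\,Z$ with $Z$ standard normal, independent of $\ov\calf$, where the random variance $\avar'_t$ splits into a diffusive-innovation piece (from $\Delta^n_iV_t$, scaling as $\Den$) and an option-error piece (from $R^{\mathrm{err}}_i$, scaling as $\delta/\sqrt T$); the balance parameter $\phi$ in (\ref{eq:rel}) sets their relative weight, but neither source is required to dominate. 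The two-maturity version is identical, the only change being the sharper bias condition $k_nT=O(1)$ in place of $k_nT\to0$.

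It remains to prove $\wh{\avar}^{\prime n}_{t,T}(u)\xrightarrow{\ov\P}\avar'_t$, after which (\ref{eq:main-conv-22}) follows by the stability of the convergence and Slutsky's lemma. By a standard LLN for truncated power variations, the first sum in (\ref{eq:avar2}) converges to $\avar'_t+LV_t^2$, so the spurious $LV_t^2$ must be removed; this is exactly the role of the lag-$2$ product, whose factors $\tau_n(\Delta^n_i\wh V)\tau_n(\Delta^n_ix)$ and $\tau_n(\Delta^n_{i-2}\wh V)\tau_n(\Delta^n_{i-2}x)$ live on non-adjacent intervals sharing \emph{no} error node, so it rescales to $LV_t^2$ with no genuine-variance contribution. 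The lag is $2$ rather than $1$ precisely because consecutive $\wh V$-increments share the error node $t^n_{i-1}$ (entering $\Delta^n_i\wh V$ and $\Delta^n_{i-1}\wh V$ with opposite signs), which would otherwise contaminate the estimate of $LV_t^2$. The main obstacle is this last step: building one variance estimator that is consistent for $\avar'_t$ \emph{without} knowing whether the diffusive-innovation or the option-observation error dominates. The two contributions scale differently ($\Den$ versus $\delta/\sqrt T$) and the error piece is a quartic functional of the only-$L^p$-integrable $\ov\epsilon$; controlling the truncated cross-moments so that the lag-$2$ subtraction removes exactly $LV_t^2$ and none of $\avar'_t$, uniformly over $\phi\in[0,1]$, is where the moment analysis and all the rate conditions in (\ref{eq:rel}) are needed at once.
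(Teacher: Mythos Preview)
Your proposal is correct and follows essentially the same route as the paper: expand $\Delta^n_i\wh V_{t,T}(u)$ into the true increment $\Delta^n_i V_t$, an option-error piece built from the $\ov\epsilon_{j,\cdot}$, and a negligible remainder, then run a stable CLT and a LLN for the variance estimator; the paper merely packages the second step into an abstract result (Proposition~\ref{prop:auto}) for a generic It\^o semimartingale observed with shrinking noise and then verifies that the option errors satisfy its hypotheses via the bounds in \eqref{eq:aux}. One technical point you gloss over: the claim that the centered summands ``form an asymptotic martingale-difference array'' is not literally true, because consecutive error increments $R^{\mathrm{err}}_i$ and $R^{\mathrm{err}}_{i-1}$ share the node $\eps^n_{t^n_{i-1}}$; the paper handles this (in the proof of Proposition~\ref{prop:auto}) by an Abel-type rearrangement, rewriting $\sum_i \Delta^n_i X\,\Delta^n_i\eps^n$ as $\sum_i(\Delta^n_i X-\Delta^n_{i+1}X)\eps^n_{i\Den}$ up to boundary terms, which \emph{is} a martingale-difference array and yields the joint stable convergence in \eqref{eq:conv1}.
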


\begin{remark}\label{rem:avar}
From the proof of Theorems~\ref{thm:vov} and \ref{thm:lev} in the Appendix, one can show that $\sqrt{k_n}\frac{\Den}{\delta/\sqrt{T} + \Den}(\wh {VV}^n_{t,T}(u) - VV_t)$ and  $\sqrt{k_n}\frac{\Den}{\delta/\sqrt{T} + \Den}(\wh {VV}^n_{t,T,T'}(u)-VV_t)$ have asymptotic $\calf$-conditional variances given respectively by 
	\begin{align*}
	\avar(VV)^{(1)}_t(u)	&= 6VV_t^2\phi^2 + 8VV_t v_{t}^{(1)}(u) \phi(1-\phi)+40 (v_t^{(1)}(u))^2 (1-\phi)^2, \\
	\avar(VV)^{(2)}_t(u)	&= 6VV_t^2 \phi^2 + 8VV_t v_{t}^{(2)}(u)  \phi(1-\phi)+40 (v_t^{(2)}(u))^2 (1-\phi)^2, 
	\end{align*}
while $\sqrt{k_n}\sqrt{\frac{\Den}{\delta/\sqrt{T} + \Den}}(\wh {LV}^n_{t,T}(u) - LV_t)$ and $\sqrt{k_n}\sqrt{\frac{\Den}{\delta/\sqrt{T} + \Den}}(\wh {LV}^n_{t,T,T'}(u)-LV_t)$ have asymptotic $\calf$-conditional variances given respectively by 
\begin{align*}
	\avar(LV)^{(1)}_t(u)	&= (VV_t\si_t^2+LV_t^2)\phi  + 2\si_t^2v_{t}^{(1)}(u)  (1-\phi), \\
\avar(LV)^{(2)}_t(u)	&= (VV_t\si_t^2+LV_t^2)\phi + 2\si_t^2v_{t}^{(2)}(u)  (1-\phi).
\end{align*}
Here, $v_{t}^{(1)}(u) = v_{t,1}(u) $, $v_{t}^{(2)}(u) = (\frac{\tau}{\tau-1} )^2v_{t,1}(u)+ (\frac{1}{\tau-1} )^2v_{t,\tau}(u)$ ($\tau$ was defined in \eqref{eq:Tprime}),
\[ v_{t,\tau}(u) = 4e^{u^2\si_t^2} (F'(\si_t^2))^2\lvert \si_t\rvert^3\rho_{t,\tau}(0)\zeta_{t,\tau}(0)^2 \int_\R \cos^2(u\lvert\si_t\rvert k)\wt \Psi(k)^2 dk, \]
$\rho_{t,\tau}(k)$ and $\zeta_{t,\tau}(k)$ are the processes from Assumption~\ref{ass:C} and $\wt \Psi$ is defined in (\ref{eq:Psi_tilde}). 
\end{remark}

As evident from Remark~\ref{rem:avar}, the rate of convergence for all estimators is $\sqrt{k_n}$ if $\phi>0$, where $\phi$ is the constant in Assumption~\ref{ass:hf}. If $\phi = 0$, then the rate of convergence of the volatility of volatility estimators slows down to  $\sqrt{k_n}\Delta_n/(\delta/\sqrt{T})$ while that of the leverage effect estimators slows down to $\sqrt{k_n}\sqrt{\Delta_n/(\delta/\sqrt{T})}$.\footnote{Even though the volatility of volatility and the leverage effect estimators are not consistent if $\phi=0$ and $\sqrt{k_n}[\Den/(\Den+\delta/\sqrt{T})] \not\to \infty$, the limit results in Theorems~\ref{thm:vov} and \ref{thm:lev} continue to hold.} The case $\phi>0$ corresponds to the situation when option observation errors are sufficiently small so that their presence does not affect the rate of convergence of the estimators.\footnote{If $\phi\in(0,1)$, the option observation errors have an effect on the limit variance as evident from Remark~\ref{rem:avar}.} If this is not the case, i.e., if $\phi=0$,  the presence of option observation errors slows down  the rate of convergence of the estimators and determines their limit distributions.  The feasible CLTs in the above two theorems have the convenient feature, from an applied point of view,  that the user does not need to know a priori the value of $\phi$ from \eqref{eq:rel} above, i.e., which of the sources of estimation error is asymptotically dominant. Our estimates of the asymptotic variance are constructed in a way that adapts to the situation at hand.

We can compare the rate of convergence of $\wh{VV}^n_{t,T}(u)$ and $\wh{LV}^n_{t,T}(u)$ with that of their return-based counterparts. It is easiest to do so in the case when both the underlying asset price and the options written on it are not contaminated by observation errors (microstructure noise). In this case, the rate of convergence of the spot counterparts of the estimators of \cite{WM14}, \cite{vetter2015estimation}, \cite{AFLWY17} and \cite{KX17} for the volatility of volatility and the leverage effect using $k_n$  high-frequency    price observations is $k_n^{1/4}$. By contrast, our estimators $\wh{VV}^n_{t,T}(u)$ and $\wh{LV}^n_{t,T}(u)$ have a faster rate of convergence of $\sqrt{k_n}$ when there are no option observation errors (or the latter are not too big, i.e., when $\phi >0$). We will see in the Monte Carlo in Section~\ref{sec:mc} that the faster rate of convergence of the option-based estimators translates into rather nontrivial efficiency gains in finite samples over their return-based counterparts. This holds true even when the observed option prices contain measurement error, while the observed asset prices do not.      
	
\begin{remark}\label{rem:infvar} 
	While  price and volatility jumps are summable under Assumption~\ref{ass:main-1}, it is possible to extend Theorems~\ref{thm:vov} and \ref{thm:lev} to a setting where  both $x$ and $\si$ may have infinite variation jumps or arbitrary degree if one
	\begin{itemize}
		\item  assumes that the infinite variation jumps of $x$ and $\si$ are stable-like,
		\item replaces the truncated realized variance estimators in \eqref{eq:vov} by characteristic-function-based  estimators from \cite{jacod2014efficient} (see also \cite{LLL18}), and
		\item removes bias terms induced by infinite variation jumps by following the debiasing procedure in \cite{jacod2014efficient} (see also \cite{LLL18}).
	\end{itemize}
Such extensions, while conceptually easy, are rather tedious to present. We therefore leave the details of these modifications (including an examination of their performance in simulated and real data) to future research.
\end{remark}

We finish this section with summarizing in Table~\ref{table:estimators} existing estimators related to the estimation of volatility of volatility and the leverage effect. We only list nonparametric estimators with a feasible CLT in the table. We do not include the leverage effect estimators of \cite{andersen2015exploring} and \cite{KX17} based on the VIX index because they do not estimate $LV_t$ in general, which is what we are after. This is further discussed in Section~\ref{sec:mc} below. The first four entries in the table are estimators of spot volatility based on return or option data. 
The volatility of volatility and  leverage effect estimators from high-frequency asset price data given in the table are obtained by   integrating over time the spot quantities examined in this paper. It is relatively straightforward to extend these results to the case of spot volatility of volatility and spot leverage effect estimation. In addition, some of the existing leverage effect estimators are estimators of the correlation--and not covariance--between the diffusive price and the volatility. Again, these results  can be extended to the estimation of the leverage effect in the way  we define it here. 
 
\begin{table}[ht!]
	\setlength{\tabcolsep}{0.1cm}
	\begin{center}\small 
		\caption{{\bf Volatility-related Estimators}\label{table:estimators}}
		\begin{tabularx}{1.03\textwidth}{llcccl}
			\toprule
			{\bf Estimand}    & \multicolumn{1}{c}{\bf Data} & \multicolumn{2}{c}{\bf Jumps in} & \multicolumn{1}{c}{\bf Noise} & \multicolumn{1}{c}{\bf Outlet}\\[+1ex] 
			                           &                 & Price  & Volatility  & \\
			\multirow{2}{*}{Spot Volatility} & \multirow{2}{*}{high-frequency asset returns} & \multirow{2}{*}{yes} & \multirow{2}{*}{yes} & \multirow{2}{*}{yes} & \cite{ait2014high}\\
			                       &                                               &      &        &       & and references therein\\     
			Spot Volatility & high-frequency asset returns & yes & yes & no & \cite{LLL18}\\
			Spot Volatility & options & yes & yes & yes & \cite{T19}\\
			Spot Volatility & options & yes & yes & yes & \cite{todorov2021bias}\\[+1ex]	
			Volatility of Volatility & high-frequency asset returns & no & no & no & \cite{vetter2015estimation}\\
			Volatility of Volatility & high-frequency asset returns & yes & no & yes & \cite{li2022volatility}\\
			Volatility of Volatility & high-frequency option data & yes & yes & yes & current paper\\[+1ex]	
			Leverage Effect & high-frequency asset returns & yes & no & yes & \cite{WM14}\\	
			Leverage Effect & high-frequency asset returns &  yes & yes & yes & \cite{AFLWY17} \\
			Leverage Effect & high-frequency asset returns & yes & no & no & \cite{KX17}\\
			Leverage Effect & high-frequency asset returns & yes & yes & no & \cite{yang2023estimation}\\	
			\multirow{2}{*}{Leverage Effect} & high-frequency option data \& &  \multirow{2}{*}{yes}  & \multirow{2}{*}{yes} & \multirow{2}{*}{yes} & \multirow{2}{*}{current paper}\\
			                          & high-frequency asset returns & & & & \\

				\bottomrule
		\end{tabularx}
	\end{center}
\emph{Note}: The volatility of volatility and leverage effect estimators based on asset price returns in the cited papers  estimate integrated quantities.
\end{table}

\section{Monte Carlo Study}\label{sec:mc}

In this section, we evaluate the performance of the proposed volatility of volatility and leverage effect estimators on simulated data.
\subsection{Setup}
We use the following model for the underlying asset price $X_t=e^{x_t}$, under the risk-neutral probability measure $\Q$, to generate the true option prices:
\begin{equation}
	\frac{dX_t}{X_{t-}} = \sqrt{V_t}dW_t+\int_{\mathbb{R}}\left(e^x-1\right)\mu(dt,dx),
\end{equation}
where
\begin{equation}
	dV_t = \kappa_v(\theta_v-V_t)dt+\sigma_v\sqrt{V_t}dB_t,
\end{equation}
and $W_t$ and $B_t$ are $\mathbb{Q}$-Brownian motions with $\textrm{corr}(dW_t,dB_t) = \rho dt$, and $\mu$ is an integer-valued random measure with $\mathbb{Q}$-compensator $dt\otimes \nu_t(dx)$ and
\begin{equation}
	\nu_t(dx) = V_t (c_{-}e^{-\lambda_-|x|}1_{\{x<0\}} + c_{+}e^{-\lambda_+|x|}1_{\{x>0\}})dx.
\end{equation}
In the above specification for $X$, the stochastic variance is modeled as a square-root diffusion process like in the popular Heston model \citep{heston}. The price jumps have intensity that is affine in the level of diffusive variance like in \cite{DPS00} and subsequent empirical option pricing work. Our jump specification is a time-changed double-exponential model, with the time-change being the integrated diffusive variance. 

We consider three parameter settings for the above model. The parameter values for the three cases are given in Table~\ref{table:pars}.  In all of them, the unconditional mean of the variance is $\theta_v = 0.02$.  In the first specification, the volatility is very persistent with half-life of a shock to stochastic variance equal to six months. In the second and third specifications, the half-life of a shock to variance is one month and ten business days, respectively. In all cases, the parameter $\rho$ is set to $-0.9$ implying strong negative correlation between price and variance diffusive shocks. We note that the Feller condition ($\sigma_v^2\leq 2\kappa_v\theta_v$) puts an upper bound on $\sigma_v$ and this means that for more persistent dynamics the volatility of volatility is smaller. Turning next to the jump specification, we set  $\lambda_-=50$ and $\lambda_+=100$. This choice implies tail decays of out-of-the-money puts and calls similar to those of observed options written on the S\&P 500 index, see e.g., \cite{AFT,AFT_b}. Finally, we set $c_{\pm}$ according to 
\[c_- = 0.9\times \frac{\lambda_-^{3}}{2}~\textrm{and}~c_+ = 0.1\times \frac{\lambda_+^{3}}{2},\]
which implies that spot jump variation is equal to spot diffusive variance, and further that $90\%$ of the jump variation is due to negative jumps. This separation of the risk-neutral variation into a diffusive part and one due to positive and negative jumps is similar to that implied from parametric models fitted to observed S\&P 500 index options, see e.g., \cite{AFT_b}. 

For simplicity, the dynamics of $x$ under $\mathbb{P}$ is the same as that under $\mathbb{Q}$ with one exception. Mainly, we do not allow for jumps under $\mathbb{P}$. A more realistic specification would be one in which we allow for price jumps but with much smaller size than the one they have under $\mathbb{Q}$. We do not consider such an extension of the setup as it has only negligible effect on the results. 

\begin{table}[ht!]
	\setlength{\tabcolsep}{0.30cm}
	\begin{center}\small 
		\caption{{\bf Parameter Setting for the Monte Carlo}\label{table:pars}}
		\begin{tabularx}{\textwidth}{Xccccccccc}
			\toprule
			Case    & \multicolumn{4}{c}{Variance Parameters} & & \multicolumn{4}{c}{Jump Parameters} \\ 
			\cline{2-5} \cline{7-10}
			& $\theta_v$ & $\kappa_v$ & $\sigma_v$ & $\rho$ & & $\lambda_-$ & $\lambda_+$ & $c_-$ & $c_+$\\
			\hline
			S & $0.02$ & $1.39$ & $0.15$ & $-0.9$ & &  $50$ & $100$ & $3.6\times 10^3$ & $50\times 10^3$\\ 
			M & $0.02$ & $7.90$ & $0.40$ & $-0.9$ & &  $50$ & $100$ & $3.6\times 10^3$ & $50\times 10^3$\\ 
			F & $0.02$ & $17.50$ & $0.70$ & $-0.9$ & & $50$ & $100$ & $3.6\times 10^3$ & $50\times 10^3$\\
			\bottomrule
		\end{tabularx}
	\end{center}
\end{table}

Observed options are given by
\begin{equation}
	\widehat{O}_{t,T}(k_{t,T}(j)) = O_{t,T}(k_{t,T}(j))(1 + 0.015\times z_{t,T}(j)),\quad j=1,\dots,N_{t,T},
\end{equation}
where $\{z_{t,T}(j)\}_{j=1}^{N_{t,T}}$ are sequences of i.i.d.\ standard normal variables which are independent of each other. The size of the observation error is calibrated to roughly match  bid--ask spreads of index option data. We set $X_0 = 2500$ and $\Delta_n = 1/(252\times 80)$, which corresponds approximately to sampling option data every five minutes in a 6.5 hours trading day. We note that our unit of time is one year and we adopt a business time convention in which one business day is of length $1/252$. We set $k_n = 80$ which means that we use all intraday option data in the analysis. At each point in time, the strikes are multiples of $5$. The strikes below and above the current price are extended in both directions by increments of $5$ until the true out-of-the-money option price falls below $0.075$. This specification of the strike grid mimics that of available  S\&P 500 index options. Next, the value of the variance at the beginning of the local time window, $V_0$, is set to the 25th, 50th or 75th quantile of its marginal distribution. Finally, the short and long tenor of the options at the beginning of the time window are set to $T= 3/252$ and $T' = 6/252$, which correspond to $3$ and $6$ business days to expiration, respectively.

To reduce the Riemann sum approximation error of the integral in (\ref{eq:spanning}), we compute option prices on a strike grid with mesh $2.5$ from the observed ones using linear interpolation in Black--Scholes Implied Volatility (BSIV) space (recall that the strikes of observed options are multiples of $5$). 

For the computation of the volatility estimators, we need to choose the value of the characteristic exponent, i.e., the value of $u$ in $V_{t,T}(u)$. We do this in a data-driven way using the options at the first time point in the local window by setting $u$ to 
\begin{equation}\label{eq:uhat}
	\widehat{u}_{t^n_{k_n} ,T^n_{k_n}} =  \inf{\{u\geq 0: |\widehat{\mathcal{L}}_{t^n_{k_n},T^n_{k_n}}(u)|\leq 0.3\}}\wedge \textrm{argmin}_{u\in[0,\overline{u}_{t^n_{k_n},T^n_{k_n}}]}|\widehat{\mathcal{L}}_{t^n_{k_n},T^n_{k_n}}(u)|,
\end{equation}
where $\overline{u}_{t^n_{k_n},T^n_{k_n}} = \sqrt{-2\log(0.05)}/\widehat{\sigma}_{t^n_{k_n},ATM}$ and $\widehat{\sigma}_{t^n_{k_n},ATM}$ is the at-the-money Black--Scholes implied volatility at time $t^n_{k_n}$ for the shortest available maturity on that day. We thus look at $\widehat{V}_{t^n_i,T^n_i}(\widehat{u}_{t^n_{k_n},T^n_{k_n}})$ and $\widehat{V}_{t^n_i,T^n_{i},T^{\prime n}_{i}}(\widehat{u}_{t^n_{k_n},T^{\prime n}_{k_n}})$.
Finally, since the model does not feature volatility jumps, we do not perform truncation, i.e., we set $\upsilon_n = \infty$.  

\subsection{Return-Based Volatility of Volatility and Leverage Effect Estimators}
We compare the performance of our option-based estimators with their counterparts formed from high-frequency returns. We assume that we sample returns at a higher frequency than we sample option prices at. Using the higher sampling frequency, we then form estimates of volatility at the times $t_i^n$. More specifically, denote
\begin{equation}
\widehat{\mathcal{L}}_{t_i^n,ret}(u) = \frac{1}{l_n}\sum_{j=1}^{l_n}e^{iu(x_{t_i-j\Delta_n/l_n}-x_{t_i-(j-1)\Delta_n/l_n})/\sqrt{\Delta_n/l_n}},~~\widehat{\sigma}_{t_i^n,ret}(u) = -\frac{2}{u^2}\log|\widehat{\mathcal{L}}_{t_i^n,ret}(u)|,
\end{equation}
\begin{equation}
\widehat{V}_{t_i^n,ret}(u) = F(\widehat{\sigma}_{t_i^n,ret}(u)),~~\Delta_i^n \widehat{V}_{t,ret}(u) = \widehat{V}_{t_{i-1}^n,ret}(u) - \widehat{V}_{t_{i}^n,ret}(u),
\end{equation}
for some sequence $l_n\rightarrow\infty$. Using the above spot volatility estimates, we have the following return-based volatility of volatility and leverage effect estimators:
\begin{equation}\begin{split}
\wh{VV}^n_{t,ret}(u)&=\frac{1}{k_n\Den} \sum_{i=2}^{k_n-1}\Bigl((\tau_n(\Delta_i^n\widehat{V}_{t,ret}(u)))^2 + 2\tau_n(\Delta_{i-1}^n\widehat{V}_{t,ret}(u))\tau_n(\Delta_i^n\widehat{V}_{t,ret}(u))\Bigr), \\
\wh{LV}^n_{t,ret}(u)& = \frac{1}{k_n\Den} \sum_{i=1}^{k_n-2}\tau_n(x_{t_{i-1}^n} - x_{t_{i+1}^n})\tau_n(\Delta_i^n\widehat{V}_{t,ret}(u)).
\end{split}\end{equation}
We can further de-bias $\wh{VV}^n_{t,ret}(u)$ to account for the effect from the nonlinear transformation of $\widehat{\mathcal{L}}_{t_i^n,ret}(u)$ in forming the statistic (no such de-biasing is needed for $\wh{LV}^n_{t,T}(u)$). We do not do this for simplicity. We note also that we could have used alternatively local truncated volatility in constructing $\wh{VV}^n_{t,ret}(u)$ and $\wh{LV}^n_{t,ret}(u)$. We prefer the current estimators as they are the direct counterparts to the option-based ones. 

As for the option-based estimators, we set $u$ in a data-driven way. Towards this end, define
\begin{equation}
\begin{split}
RV_{t} &= \frac{1}{k_nl_n\Delta_n}\sum_{j=1}^{k_nl_n}(x_{t-(j-1)\Delta_n/l_n} - x_{t-j\Delta_n/l_n})^2,\\
BV_{t} &= \frac{\pi}{2}\frac{1}{k_nl_n\Delta_n}\sum_{j=1}^{k_nl_n-1}|x_{t-(j-1)\Delta_n/l_n} - x_{t-j\Delta_n/l_n}||x_{t-j\Delta_n/l_n} - x_{t-(j+1)\Delta_n/l_n}|,
\end{split}
\end{equation}
and set 
\begin{equation}
\widehat{u}_{t,ret} = \sqrt{-\frac{2\log(0.3)}{RV_t\wedge BV_t}}.
\end{equation}
With this choice of $u$, our return-based estimates become $\wh{VV}^n_{t,ret}(\widehat{u}_{t,ret})$ and $\wh{LV}^n_{t,ret}(\widehat{u}_{t,ret})$. We implement these estimators with $l_n = 72$. This corresponds to sampling every five seconds and forming local volatility estimates over blocks of length five minutes.

\subsection{Results}
The Monte Carlo results for estimating variance of log-variance (i.e., $VV_t$ in \eqref{eq:VV} with $F=\log x$) are reported in Table~\ref{tb:mc_vov}. We can draw several conclusions from them. First, the estimators which use options with one tenor only tend to be downward biased except for the very persistent case S. The size of this negative bias in most configurations increases with time to maturity of the options. This can be explained intuitively with higher-order biases present in the option-based spot volatility estimator, which increase in size as a function of the time to maturity. For example, the volatility of volatility generates a downward bias in the spot volatility estimator. This bias is naturally bigger for the third volatility specification (case F). As a result, the bias in our volatility of volatility estimators, using options with one tenor only, tends to be larger in magnitude for this volatility specification (reported in Panel C of the table). Second, the volatility estimator using two tenors corrects for the biases in spot volatility due to mean reversion in volatility and volatility of volatility. As a result, the estimator $\widehat{VV}_{0,T,T'}$ tends to have much smaller in magnitude biases in cases M and F. The reduction in bias is particularly large for the most volatile specification (case F). The cost of the reduction in bias in $\widehat{VV}_{0,T,T'}$ is increased volatility of that estimator when compared with the single-tenor estimators $\widehat{VV}_{0,T}$ and $\widehat{VV}_{0,T'}$. 

Third, volatility of volatility is most difficult to estimate when it is smallest, i.e., in scenario S. The reason for this is because in this case the signal to noise ratio is at its lowest. As a result, various biases in the estimation play a more important role in a relative sense now. In such a situation, the two tenor option-based estimator has comparable bias as its one-tenor counterparts. Finally, comparing the option-based estimates with the return-based ones, we can see the vast improvement offered by the option data for measuring  volatility of volatility. This is true regardless of the volatility specification and the starting value of volatility. Recall that our return-based estimator uses much higher sampling frequency than the option-based estimators. In spite of that, the option-based estimators are much more precise. This illustrates the advantages of using option data for inference related to volatility.\footnote{The performance of the return-based estimator will worsen further if one is to allow for market microstructure noise in the observed asset price.} 

\begin{table}[h!]                                                                                                   
	\centering                                                                                                                              
	\caption{Monte Carlo Results for Volatility of Volatility}\label{tb:mc_vov}                                                                                                                    
	\begin{tabularx}{1.0\textwidth}{X|rrr|rrr|rrr}                                                                                             
		\toprule                                                                                                                                
		Estimator &  Bias & STD & RMSE &  Bias & STD & RMSE & Bias & STD & RMSE \\         
		\hline     
		\multicolumn{10}{c}{\textbf{Panel A: Case S}}\\
	         \hline                                                                                                                             
		& \multicolumn{3}{c}{$V_0 = 0.0106 $} & \multicolumn{3}{c}{$V_0=0.0174$} & \multicolumn{3}{c}{$V_0=0.0265$} \\ 
		\hline                                                                                  
		$\widehat{VV}_{0,T}$ & 0.18 & 0.70 & 0.72 & -0.03 & 0.57 & 0.57 & 0.19 & 0.54 & 0.58 \\                             
		$\widehat{VV}_{0,T'}$ & 0.19 & 0.47 & 0.51 & -0.01 & 0.41 & 0.41 & 0.10 & 0.41 & 0.42 \\                                                    
		$\widehat{VV}_{0,T,T'}$ & 0.17 & 1.50 & 1.51 & 0.01 & 1.29 & 1.29 & 0.21 & 1.23 & 1.25 \\ 
		$\widehat{VV}_{0,ret}$ & 1.57 &158.28 & 158.20 & 14.24 & 224.41 & 224.75  & 19.23 & 322.00 & 322.37 \\                                             
		\hline     
		\multicolumn{10}{c}{\textbf{Panel B: Case M}}\\
		\hline                                                                                                                             
		& \multicolumn{3}{c}{$V_0 = 0.0095 $} & \multicolumn{3}{c}{$V_0=0.0167$} & \multicolumn{3}{c}{$V_0=0.0269$} \\ 
		\hline                                                                              
		$\widehat{VV}_{0,T}$ & -0.13 & 0.31 & 0.33 & -0.12 & 0.31 & 0.33 & -0.06 & 0.33 & 0.34 \\                             
		$\widehat{VV}_{0,T'}$ & -0.32 & 0.22 & 0.39 & -0.18 & 0.27 & 0.33 & -0.10 & 0.30 & 0.31 \\                                                    
		$\widehat{VV}_{0,T,T'}$ & 0.00 & 0.51 & 0.51 & -0.05 & 0.53 & 0.53 & -0.01 & 0.57 & 0.57 \\ 
		$\widehat{VV}_{0,ret}$ & 1.51 &28.62 & 28.66 & 0.07 & 33.49 & 33.49  & -0.65 & 41.61 & 41.62 \\   
		\hline  
		\multicolumn{10}{c}{\textbf{Panel C: Case F}}\\
		\hline   
		& \multicolumn{3}{c}{$V_0 = 0.0078$} & \multicolumn{3}{c}{$V_0=0.0156$} & \multicolumn{3}{c}{$V_0=0.0275$} \\   
		\hline        
		$\widehat{VV}_{0,T}$ & -0.39 & 0.23 & 0.46& -0.21 & 0.26 & 0.33 & -0.15 & 0.27 & 0.31 \\                             
		$\widehat{VV}_{0,T'}$ & -0.62 & 0.15 & 0.64 & -0.42 & 0.19 & 0.46 & -0.27 & 0.23 & 0.36 \\                                          
		$\widehat{VV}_{0,T,T'}$ & -0.11 & 0.43 & 0.44 & -0.01 & 0.40 & 0.40& -0.05 & 0.38 & 0.38 \\   
		$\widehat{VV}_{0,ret}$ &  0.89 & 10.37 & 10.40 & -0.17 & 11.43 & 11.43 & -0.51 & 13.23 & 13.24 \\                           
		\bottomrule                                                                                                                                                                                                                                                             
	\end{tabularx} 
	\smallskip
	\begin{minipage}{\textwidth}                                                                                                   
		\emph{Note}: Reported results are based on $1{,}000$ Monte Carlo replications. STD stands for standard deviation and RMSE for root mean squared error. All quantities are reported in relative terms, which is done by dividing by $VV_0$ (the true value). The top row of each panel reports the value of spot variance at the beginning of the time window.
	\end{minipage}                                                                                                                          
\end{table}            

We proceed next with a discussion of the Monte Carlo results for estimating the leverage effect corresponding to the log-variance (i.e., of $LV_t$ in \eqref{eq:VV} with $F=\log x$). These results are reported in Table~\ref{tb:mc_lev}. We note that for the Heston volatility model used in the Monte Carlo, $LV_t$ is a constant equal to $\rho\times\sigma_v$. Qualitatively, the results for estimating $LV_t$ are similar to those for estimating $VV_t$. In particular, the biases in the estimation are larger in magnitude for the fastest mean reversion simulation scenario F. The use of two tenors is rather beneficial in terms of reducing the downward bias in the estimation. Comparing the results in Tables~\ref{tb:mc_vov} and \ref{tb:mc_lev}, we can see that the precision of the estimation of the leverage effect is higher than that of the volatility of volatility. Finally, the option-based estimators again perform much better than the return-based one. The advantage in relative sense is not as big as in the case of volatility of volatility but is nevertheless quite nontrivial.

\begin{table}[h!]                                                                                                   
	\centering                                                                                                                              
	\caption{Monte Carlo Results for Leverage Effect}\label{tb:mc_lev}                                                                                                                    
	\begin{tabularx}{\textwidth}{X|rrr|rrr|rrr}                                                                                             
		\toprule                                                                                                                                
		Estimator &  Bias & STD & RMSE &  Bias & STD & RMSE & Bias & STD & RMSE \\   
	         \hline     
		\multicolumn{10}{c}{\textbf{Panel A: Case S}}\\
		\hline                                                                                                                             
		& \multicolumn{3}{c}{$V_0 = 0.0106 $} & \multicolumn{3}{c}{$V_0=0.0174$} & \multicolumn{3}{c}{$V_0=0.0265$} \\ 
		\hline  
		$\widehat{LV}_{0,T}$ & -0.05 & 0.29 & 0.29 & -0.02 & 0.27 & 0.27 & 0.00 & 0.24 & 0.24 \\                             
		$\widehat{LV}_{0,T'}$ & -0.03 & 0.22 & 0.22 & 0.00 & 0.22 & 0.22 & 0.00 & 0.21 & 0.21 \\                                                    
		$\widehat{LV}_{0,T,T'}$ & -0.07 & 0.42 & 0.43 & -0.02 & 0.38 & 0.38 & 0.00 & 0.36 & 0.36 \\ 
		$\widehat{LV}_{0,ret}$ & 0.07 & 4.12 & 4.11 & 0.22 & 4.57 & 4.58 & -0.01 & 5.43 & 5.43 \\                                                        
		\hline     
		\multicolumn{10}{c}{\textbf{Panel B: Case M}}\\
		\hline                                                                                                                             
		& \multicolumn{3}{c}{$V_0 = 0.0095 $} & \multicolumn{3}{c}{$V_0=0.0167$} & \multicolumn{3}{c}{$V_0=0.0269$} \\ 
		\hline  
		$\widehat{LV}_{0,T}$ & -0.07 & 0.18 & 0.20 & -0.07 & 0.19 & 0.20 & -0.04 & 0.18 & 0.19 \\                             
		$\widehat{LV}_{0,T'}$ & -0.18 & 0.16 & 0.24 & -0.10 & 0.17 & 0.19 & -0.06 & 0.17 & 0.18 \\                                                    
		$\widehat{LV}_{0,T,T'}$ & -0.01 & 0.24 & 0.24 & -0.04 & 0.25 & 0.25 & -0.02 & 0.24 & 0.24 \\ 
		$\widehat{LV}_{0,ret}$ &-0.04 & 1.75 & 1.75 & -0.08 & 1.84 & 1.84 & -0.08 & 2.00 & 2.00 \\             
		\hline  
		\multicolumn{10}{c}{\textbf{Panel C: Case F}}\\
		\hline   
		& \multicolumn{3}{c}{$V_0 = 0.0078$} & \multicolumn{3}{c}{$V_0=0.0156$} & \multicolumn{3}{c}{$V_0=0.0275$} \\   
		\hline        
		$\widehat{LV}_{0,T}$ & -0.22 & 0.16 & 0.27 & -0.11 & 0.17 & 0.20 & -0.08 & 0.17 & 0.18 \\                             
		$\widehat{LV}_{0,T'}$ & -0.38 & 0.13 & 0.40 & -0.23 & 0.14 & 0.27 & -0.14 & 0.15 & 0.21 \\                                          
		$\widehat{LV}_{0,T,T'}$ & -0.06 & 0.21 & 0.22 & -0.01 & 0.21 & 0.21 & -0.03 & 0.20 & 0.20 \\
		$\widehat{LV}_{0,ret}$ &  -0.08 & 1.04 & 1.04  & -0.08 & 1.08 & 1.08 & -0.07 & 1.14 & 1.14 \\	                            
		\bottomrule                                                                                                                                                                                                                                                             
	\end{tabularx} 
	\smallskip
	\begin{minipage}{\textwidth}                                                                                                   
		\emph{Note}: Reported results are based on $1{,}000$ Monte Carlo replications. STD stands for standard deviation and RMSE for root mean squared error. All quantities are reported in relative terms, which is done by dividing by $LV_0$ (the true value). The top row of each panel reports the value of spot variance at the beginning of the time window.
	\end{minipage}                                                                                                                          
\end{table}

\subsection{Using Alternative Volatility Proxies}                                                                                                                            

We finish this section with a comparison of the estimates of volatility of volatility and the leverage effect when one uses the VIX index as a proxy for the spot volatility as done by \cite{andersen2015exploring} and \cite{KX17}. The theoretical value of the squared VIX index is $\frac{1}{\mathcal{T}}\mathbb{E}_t\Bigl[\int_t^{t+\mathcal{T}}\sigma_s^2ds + 2\int_t^{t+\mathcal{T}}\int_{\mathbb{R}}(e^z-1-z)\mu(ds,dz)\Bigr]$. For the model used in the Monte Carlo, the value of the square of the VIX index is given by
\begin{equation}\label{eq:vix}
VIX_t^2 = \left(1+0.9\frac{\lambda_-}{\lambda_- -1} +0.1\frac{\lambda_+}{\lambda_+ +1}\right)\left[V_t\left(\frac{1-e^{-\kappa_v\mathcal{T}}}{\kappa_v\mathcal{T}}\right) + \theta_v\left(1 - \frac{1-e^{-\kappa_v\mathcal{T}}}{\kappa_v\mathcal{T}}\right)\right],~~\mathcal{T} = \frac{1}{12}.
\end{equation}
This implies that the ratio of the diffusion coefficients of $\log(VIX_t^2)$ and $\log(\sigma_t^2)$ in the model is given by
\[ s_t = \frac{2V_t}{VIX_t^2}\frac{1-e^{-\kappa_v\mathcal{T}}}{\kappa_v\mathcal{T}}. \]
Therefore, the counterpart of $VV_t$ for $VIX_t$ is $s_t^2\times VV_t$ and of $LV_t$ is $s_t\times LV_t$. As an example, the ratio $s_t$ takes values in the range $(0.3,\,0.6)$ for scenario F, depending on the starting value of the volatility considered here. This implies large differences in volatility of volatility and the leverage effect depending on whether one uses $\sigma_t$ or $VIX_t$. Note also that this gap is time-varying implying that the dynamics of the series will be different as well.       

\section{Empirical Illustration}\label{sec:emp}
\subsection{Data}
We apply the developed estimation techniques to study the volatility of volatility and leverage effect of the S\&P 500 market index. We use high-frequency data of S\&P 500 index options traded on the CBOE options exchange for the period 2016--2020. On each trading day in our sample period, we record option prices every five minutes, starting from 9.35 AM EST until 4.00 PM EST. At each point in time we record out-of-the-money option mid-quotes, keeping only strikes with nonzero bids and with a ratio of ask/bid of less than $10$. The moneyness of the options is determined on the basis of a synthetic futures extracted from the option data using put-call parity and short-term T-bill interest rate. We remove from the analysis maturities for which the gap between strikes of the three nearest-the-money puts and calls is above $5$ (which is the minimum strike gap for SPX options). We remove also maturities for which $\max
\{\widehat{O}_{t,T}(k_{t,T}(1)),\widehat{O}_{t,T}(k_{t,T}(N_{t,T}))\}/\max_{j=1,...,N_{t,T}}\widehat{O}_{t,T}(k_{t,T}(j))>0.025$. 

From the remaining maturities on a given trading day, we keep the two closest to expiration with time to maturity between $2$ and $16$ business days. We require that the gap between the two tenors be at least three business days. The median time-to-maturities $T$ and $T'$ at the end of each trading day in our sample  are approximately $2$ and $6$ business days (and in $95\%$ of the days in the sample we use only options that have less than two weeks to expiration). These numbers are very close to the ones used in the Monte Carlo. In choosing $T$ and $T'$, there is a bias-variance trade-off. Smaller $T$ and $T'$ imply less biased spot volatility estimators (as our asymptotic expansions are for $T,T'\rightarrow 0$) but more noisy ones as we have fewer option observations per maturity at each point in time as $\delta$ (the mesh of the strike grid) is fixed. Our Monte Carlo experiments showed that, for $T$ and $T'$ similar to the ones used in the empirical application, the finite sample performance of the estimators is good.  

Finally, we exclude from the analysis February 5 and 6, 2018, during which there were wild intraday movements in volatility associated with the collapse of inverse exchange-traded volatility instruments. The choice of the characteristic exponent $u$ is done  exactly as in our simulation expirement. As in the Monte Carlo, we compute option prices on an equidistant strike grid with strike gap of $2.5$ via linear interpolation in BSIV space from the observed ones. We perform truncation of the volatility increments by setting the truncation parameter $\upsilon_n$ to
$$\widehat{\upsilon}_n = 3\left(\frac{\pi}{8}\frac{1}{(k_n-1)\Delta_n}\sum_{l=0}^{3}\sum_{i=2}^{k_n}|\Delta_i^n\widehat{V}_{t-l/252}|\,|\Delta_{i-1}^n\widehat{V}_{t-l/252}|\right)^{0.49},$$
where $\widehat{V}_t$ is one of the three considered spot variance estimators. 

\subsection{Results}
In the middle panel of Figure~\ref{fig:sp_vov_lev}, we plot our estimates of the market volatility of volatility, i.e., our estimates of the square of the diffusive coefficient of $\log(\sigma_t^2)$. In the top panel of the figure, we also plot the market spot diffusive volatility. We display $5$-day moving averages of the daily estimates in order to focus on lower frequency variation in the time series and minimize the impact of measurement error. As seen from the figure, market volatility of volatility displays only mild variation during our sample with occasional and relatively short-lived spikes. We plot both $\widehat{VV}_{t,T}$ as well as $\widehat{VV}_{t,T,T'}$. Exactly as in the Monte Carlo, $\widehat{VV}_{t,T,T'}$ is higher than $\widehat{VV}_{t,T}$. The gap between the two series is particularly large during 2017 when volatility was relatively low. Recall from the Monte Carlo that when volatility is low (relative to its mean), the downward bias in $\widehat{VV}_{t,T}$ tends to be more significant. 

\begin{figure}[h!]
	\begin{center}
		\includegraphics[scale=0.75]{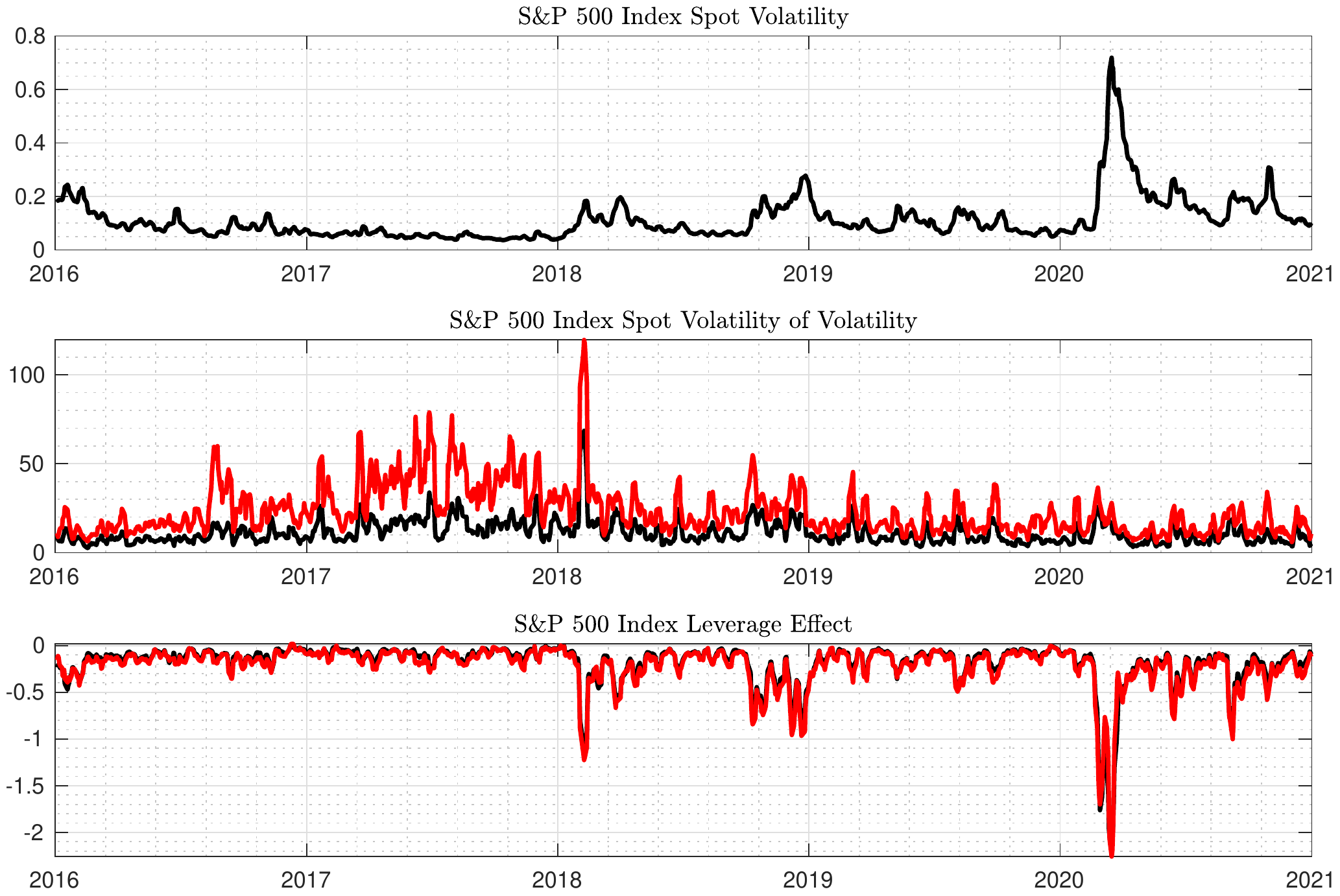}
	\end{center}
	\caption{{\bf S\&P 500 Index Volatility Risk}. The top panel displays a $5$-day moving average of annualized $\sqrt{\widehat{V}_{t,T}(u)}$; the middle panel displays $5$-day moving averages of $\widehat{VV}_{t,T}$ (black line) and $\widehat{VV}_{t,T,T'}$ (red line); the bottom panel displays $5$-day moving averages of $\widehat{LV}_{t,T}$ (black line) and $\widehat{LV}_{t,T,T'}$ (red line).}\label{fig:sp_vov_lev} 
\end{figure}

Related to that, the reported volatility of spot volatility estimates are somewhat larger than the values of the VVIX index during the same period (when reported in the same units as our estimators). The reason for this is that VVIX is a measure of volatility of the VIX index, which is a conditional expectation of one-month future volatility. The mean-reversion in volatility makes the VIX smoother than the spot volatility and this can explain the differences between our estimates and the VVIX, see equation (\ref{eq:vix}) above and the discussion afterwards. 

Consistent with earlier work cited in the introduction that argues   for a partial disconnect between volatility risk and volatility of volatility risk, we find weak correlation between $\widehat{V}_{t,T}(u)$ and $\widehat{VV}_{t,T,T'}$. For example, the big increase in market volatility in the Spring of 2020 is not accompanied by a significant change in the volatility of volatility (recall that $\widehat{VV}_{t,T,T'}$ is an estimate of volatility of log-variance).   

In the bottom panel of Figure~\ref{fig:sp_vov_lev}, we plot our estimates of a $5$-day moving average of the market leverage effect. Unlike the case of volatility of volatility, the two estimates $\widehat{LV}_{t,T}$ and $\widehat{LV}_{t,T,T'}$ are now much closer on average. We note also that the leverage effect increases sharply in magnitude at the onset of the pandemic in the Spring of 2020 when the market volatility also increased a lot.  

As mentioned in the introduction, our nonparametric estimates for the volatility of volatility and the leverage effect can be used as diagnostic tools for the specification of  volatility dynamics. To illustrate this, we can contrast the time series properties of our nonparametric estimates with those implied by  two popular volatility specifications considered in prior work. One is the Heston model that we used in our Monte Carlo, with the possible addition of volatility jumps, see e.g., \cite{DPS00}. The other is an exponential Ornstein--Uhlenbeck (OU) volatility specification, see e.g., \cite{chernov2003alternative}, in which the logarithm of the diffusive variance follows an Ornstein--Uhlenbeck process. 

The Heston model plus jumps and the exponential Ornstein--Uhlenbeck (OU) model differ in terms of the volatility of volatility and leverage effect they generate. For the Heston model, $VV_t$ is proportional to $1/\sigma_t^2$ while $LV_t$ is constant. On the other hand, for the exponential-OU volatility model, $VV_t$ is constant while $LV_t$ is proportional to $\sigma_t$. Our nonparametric evidence shows that the exponential-OU volatility model can better rationalize the observed dynamics of $VV_t$ and $LV_t$. In particular, the estimated volatility of volatility exhibits little time series variation while the estimated leverage effect is inversely related to the market volatility. These features of the observed estimates are at odds with those implied by the classical Heston model. 

\section{Conclusion}\label{sec:concl}

The volatility of volatility and leverage effect are asset price characteristics that are notoriously difficult to estimate from high-frequency price data, particularly when price and volatility have jumps and market microstructure noise is present. This is mainly due to the latency of volatility and the fact that return-based spot volatility estimates are rather noisy. 

In this paper, we propose to estimate the volatility of volatility and the leverage effect using high-frequency data from short-dated options written on the underlying asset. Our nonparametric estimators are consistent, asymptotically mixed normal and permit feasible inference in the joint presence of price and volatility jumps and  observation errors. A simulation study reveals nontrivial efficiency gains from using the proposed estimators over ones constructed using only high-frequency price data. 

Applying  our estimators to S\&P 500 index high-frequency option data, we generate  reliable market spot volatility of volatility and leverage effect estimates using as little as one day of $5$-minute data. Our estimates provide nonparametric evidence in favor of an exponential-OU model and against a Heston-type model for market volatility. 


\appendix

\section*{Appendix}
\section{High-Frequency Expansions of Volatility Estimators Based on Characteristic Functions}\label{sec:exp}

The estimators constructed in the main text rely on asymptotic expansions of $\Delta^n_i V_{t,T}$ and $\Delta^n_i V_{t,T,T'}$ from \eqref{eq:incr-V}. Such expansions are obtained in \cite{CT23_a} under the assumption that the price process $x$ is a  \emph{deep It\^o semimartingale with $N$ hidden layers}, where $N$ in our case is the same number as in \eqref{eq:rel}.  Informally speaking, $x$ is a deep It\^o semimartingale with $N$ hidden layers, if its coefficients (i.e., the drift $\al$, the volatility $\si$, and its jump intensity, to be introduced formally below) are again It\^o semimartingales  (layer 1), which in turn have coefficients that are  It\^o semimartingales (layer 2) etc., repeated in this way until the $N$th layer. 

In order to state the formal definition, we
fix two integers $d,d'\geq2$ (which may depend on $N$). The integer $d$ is the number of Brownian motions needed to model the joint diffusive behavior of $x$ and of all coefficients in all $N$ layers. Accordingly, we let $\mathbb{W}=(W^{(1)},\dots,W^{(d)})^\top$  be a $d$-dimensional standard $\F$-Brownian motion   such that $W^{(1)}=W$ and $W^{(2)}=\ov W$. The integer $d'$, which may or may not be equal to $d$, is used to model the joint jump behavior of $x$ and its ingredients. 
Thus, we consider an integer-valued random measure $\mu$, whose $\F$-compensator is given by 
\begin{equation}\label{eq:nu} 
	\nu(ds,dz) = \la(s-,z)F(dz) ds
\end{equation}
for some intensity process $\la(s,z)$ and $d'$-dimensional measure $F$. One way of realizing $\mu$ is to start with an $\F$-Poisson random measure $\pf$ on $[0,\infty)\times\R^{d'}\times\R$ (independent of $\mathbb{W}$) with intensity measure $\qf(dt,dz,dv)=dt F(dz) dv$ and then to define
\begin{equation}\label{eq:mu} 
	\mu(ds,dz) = \int_{\R} \bone_{\{ 0\leq v\leq \la(s-,z) \}} \pf(ds, dz,dv).
\end{equation}
With $\mu$ at hand, we assume that the jump parts of $x$ and $\si$ from \eqref{eq:x0} and \eqref{eq:si0} are respectively given by 
\begin{equation}\label{eq:jumps} 
	J^x_t = \iint_0^t  \ga(s,z)( \mu-\nu)(ds,dz),\qquad J^\si_t = \iint_0^t  \ga^\si(s,z)(\mu-\nu)(ds,dz),
\end{equation}
where $\ga$ and $\ga^\si$ are some predictable processes and  $\iint_s^t = \int_s^t \int_{\R^{d'}}$.  

The intensity process $\la(t,z)$ from \eqref{eq:nu} is assumed to be a nonnegative It\^o semimartingale (for fixed $z$) of the form
\begin{equation}\label{eq:la} \begin{split}
		\la(t,z)&=\la(0,z)+\int_0^t\al^\la(s,z)ds+\sum_{i=1}^d 	\int_0^t\si^{\la,(i)}(s,z) dW^{(i)}_s \\
		&\quad+ \iiint_0^t  \ga^\la(s,z,z',v') (\pf-\qf)(ds,dz',dv')
		,\end{split}
\end{equation}
where  $\iiint_s^t = \int_s^t\int_{\R^{d'}}\int_{\R}$ and $\al^\la$, $\si^{\la,(i)}$ and $\ga^\la$ are predictable coefficients.

Equations~\eqref{eq:x0} and \eqref{eq:si0}  specify the log-price process $x$ and its associated volatility process $\si$. Since $x$ is assumed to be a deep It\^o semimartingale, we still need to specify the other coefficients appearing in the various layers of $x$. To do so in a concise way, we use matrix notation and define
\begin{align*} 
	\theta(s,z)&= (\si_s,0,\dots,0,\al_s,\ga(s,z)) \in \R^{1\times (d+2)},	 \\
	y(ds,dz)&=(dW^{(1)}_s\delta_0(dz),\dots,dW^{(d)}_s \delta_0(dz), ds \delta_0(dz),( \mu-\nu)(ds,dz))^\top, 
\end{align*}
where $\delta_0$ is the Dirac measure at $0$. Note that $\theta$ is a row vector, while $y$ is a column vector (of measures). So using matrix-notation, we can  compactly write
\begin{equation}\label{eq:xshort} 
	x_t=x_0 +\iint_0^t \theta(s,z)y(ds,dz)
\end{equation} 
instead of \eqref{eq:x0}. In order to define the semimartingale characteristic of $\theta$ (and the semimartingale characteristics thereof, and so on), we use tensors (or arrays), adding one dimension each time we go one layer deeper. More precisely, 
for every $i=1,\dots,N$, we recursively  define processes $\{\theta(t,z_1,\dots,z_i):t\geq0, z_1,\dots,z_i\in\R^{d'}\}$ with values in $\R^{1\times (d+2)\times\cdots\times (d+2)}=\R^{1\times (d+2)^{\times i}}$ through
\begin{equation}\label{eq:beta} 
	\theta(t,z_1,\dots,z_i) = \theta(0,z_1,\dots,z_i)+\iint_0^t \theta(s,z_1,\dots,z_i,z_{i+1}) y(ds,dz_{i+1}),
\end{equation}
where for $A\in \R^{k_1\times\dots\times k_i}$ and $v\in\R^{k_i}$, the product $Av\in\R^{k_1\times\dots\times k_{i-1}}$ is given by
\[ (Av)_{j_1,\dots,j_{i-1}} = \sum_{j_i=1}^{k_i} A_{j_1,\dots,j_i}v_{j_i},\quad j_\ell=1,\dots,k_\ell,\quad \ell=1,\dots,i-1. \]
In other words, $\theta(t,z_1,z_2)\in\R^{1\times (d+2)\times (d+2)}$ specifies the semimartingale coefficients of $\theta(t,z)$, $\theta(t,z_1,z_2,z_3)\in\R^{1\times (d+2)\times (d+2)\times(d+2)}$ specifies the semimartingale coefficients of $\theta(t,z_1,z_2)$, etc. For example, if $x$ and $\si$ are given by \eqref{eq:x0} and \eqref{eq:si0}, respectively, then 
\begin{equation}\label{eq:x-1} 
		\theta(t,z)_1=\si_t, \quad \theta(t,z)_2=\cdots=\theta(t,z)_{d}=0, \quad \theta(t,z)_{d+1}=\al_t,\quad
		\theta(t,z)_{d+2}=\ga(t,z),
\end{equation}
and
\begin{equation}\label{eq:si-1} 
	\begin{aligned}
		\theta(t,z_1,z_2)_{21} &= \si^\si_t,\quad \theta(t,z_1,z_2)_{22}=\ov \si^\si_t,\quad
		\theta(t,z_1,z_2)_{23}=\cdots=\vartheta(t,z_1,z_2)_{2,d}=0,\\
		\theta(t,z_1,z_2)_{2,d+1}	&=\al^\si_t,\quad \theta(t,z_1,z_2)_{2,d+2}=\ga^\si(t,z_2)
	\end{aligned}
\end{equation}
Note that if $\bj_i=(j_1,\dots,j_i)$ and $j_\ell\in\{1,\dots,d+1\}$ for some $\ell\in\{1,\dots,i\}$, then only the value of $\theta(t,\bz_i)_{\bj_i}$ at $z_\ell=0$ matters, because  the spatial variable $z_\ell$ only matters for the jump part but not for drift or volatility. For instance, in the above example,  there is no loss of generality if we assume  $\theta(t,z_1,z_2)_{2,d+2}=\theta(t,0,z_2)$ ($=\ga^\si(t,z_2)$). 

We are now in the position to state the structural assumptions on the price process $x$ needed for our asymptotic analysis. We write $\bz_i= (  z_1,\dots,z_i)$ in what follows.
 \begin{Assumption}\label{ass:main} Under the risk-neutral probability measure $\Q$, the logarithmic price process $x$ is a  deep It\^o semimartingale with $N\geq3$ hidden layers given by \eqref{eq:xshort} such that there exist a localizing sequence $(T_n)_{n\in\N}$ of stopping times, an exponent $r\in[1,2)$, deterministic nonnegative measurable functions $J_n(z)$, $j_n(z,v)$ and $\mathcal{J}_n(\bz_{N+1})$, and for all $0<t<t'<t+1$, an $\F$-predictable process $(s,z)\mapsto \la_{t,t'}(s,z)$ defined for $s\in [t,t+1]$ and $z\in\R^{d'}$ with the following properties:
 	\begin{enumerate}
 		\item The functions $J_n(z)$ and $j_n(z,v)$ are real-valued and   $\int_{\R^{d'}\times\R} j_n(z,v) F(dz)dv<\infty$ for each $n\in\N$. Moreover, for all $t<T_n$, $t'\in[t,t+1]$, $z,z'\in\R^{d'}$ and $v'\in\R$, 
 		\begin{equation}\label{eq:bound} 
 			\lvert\la(t,z)\rvert+	 \lvert \al^\la(t,z)\rvert+\sum_{i=1}^{d} \lvert \si^{\la,(i)}(t,z)\rvert \leq J_n(z),\quad \lvert \ga^\la(t,z,z',v')\rvert^2
 			\leq J_n(z)^2j_n(z',v')
 		\end{equation}
 		and
 		\begin{equation}\label{eq:cont}\begin{split}
 				\sum_{i=1}^d   \E[(\si^{\la,(i)}(t',z)-\si^{\la,(i)}(t,z))^2\wedge1]^{1/2}& \leq \lvert t'-t\rvert^{1/2}J_n(z),	 \\
 				\E[(\ga^\la(t',z,z',v')-\ga^\la(t,z,z',v'))^2\wedge1]^{1/2}	&\leq \lvert t'-t\rvert^{1/2}J_n(z)	j_n(z',v')^{1/2}.
 		\end{split}\end{equation}
 		\item The function $\mathcal{J}_n(\bz_{N+1})$ takes values in $\R^{(d+2)^{N+1}}$ and for all $n\in\N$, $j_1,\dots,j_{N+1}\in\{1,\dots,d+2\}$, $z_1,\dots,z_{N+1} \in\R^{d'}$ and $t<T_n$, we have that
 		\begin{equation}\label{eq:theta-bound} 
 			\lvert\theta(t,\bz_{N+1})_{j_1,\dots,j_{N+1}}\rvert^r \vee 	\lvert\theta(t,\bz_{N+1})_{j_1,\dots,j_{N+1}}\rvert^2 \leq \mathcal{J}_n(\bz_{N+1})_{j_1,\dots,j_{N+1}},
 		\end{equation}
 		\begin{equation}\label{eq:beta-L2} 
 			\int_{(\R^{d'})^{N+1}} 
 			\mathcal{J}_n(\bz_{N+1})_{j_1,\dots,j_{N+1}}
 			\prod_{\ell: j_\ell \neq d+2} \delta_0(dz_{\ell})\prod_{\ell: j_\ell=d+2} J_n(z_\ell)  F(dz_{\ell}) < \infty. 
 		\end{equation}
 		\item For any $0<t<t',s<t+1$ and $z\in\R^{d'}$, the random variable $\la_{t,t'}(s,z)$ is $\calf_t\vee \calg_{t'}$-measurable. Moreover, if $s<T_n$, then
 		\begin{equation}\label{eq:la-smooth} 
 			\E[(\la(s,z)-\la_{t,t'}(s,z))^2\wedge1]^{1/2}\leq (\lvert t'-t\rvert^{1/2}+ [( s-t')_+]^2) J_n(z),
 		\end{equation}
 		where $x_+ = x\vee 0$.
 	\end{enumerate} 
 	Still under $\Q$, we also assume  \eqref{eq:x0} and \eqref{eq:si0} without loss of generality, so that we have \eqref{eq:x-1} and \eqref{eq:si-1}.
 \end{Assumption}
 
 Assumption \ref{ass:main} guarantees that the integrals in \eqref{eq:xshort} are all well defined and It\^o semimartingales, and in fact, is not much stronger than just that. Part 1 and 2 contain some classical  integrability and regularity conditions on the last layer of the deep It\^o semimartingale $x$ as well as the coefficients of the intensity process $\la$. Part 3 of Assumption~\ref{ass:main} is also mild and is satisfied, for example, if the intensity process $\la(t,z)$ is also a deep It\^o semimartingale with at least three hidden layers. Consider, for instance, the cases where  $\la(s,z)=\la_s z$ and
 \begin{align*}
 	\la_s&=\la_0+\int_0^s \si^\la_r dW_r,	\quad\si^\la_r=\si^\la_0+\int_0^r \si(\si^\la)_u dW_u,\\
 	\si(\si^\la)_u&=\si(\si^\la)_0+\int_0^u \si(\si(\si^\la))_v dW_v,\quad \si(\si(\si^\la))_v=\si(\si(\si^\la))_0+\int_0^v \si(\si(\si(\si^\la)))_w dW_w
 \end{align*}
 for some locally bounded $\si(\si(\si(\si^\la)))$.
 Then one can choose
 \begin{align*}
 	\la_{t,t'}(s,z)&=\biggl(\la_t+\si_t^\la(W_s-W_{t'})+\si(\si^\la)_t\int_{t'}^s\int_{t'}^r dW_udW_r+\si(\si(\si^\la))_t\int_{t'}^s\int_{t'}^r\int_{t'}^u dW_v dW_udW_r\\
 	&\quad+\si(\si(\si(\si^\la)))_t\int_{t'}^s\int_{t'}^r\int_{t'}^u \int_{t'}^v dW_wdW_v dW_udW_r\biggr)z,
 \end{align*} 
 which yields $\la(s,z)-\la_{t',t'}(s,z)=O((s-t')^2)$ and $\la_{t',t'}(s,z)-\la_{t,t'}(s,z)=O(\sqrt{t'-t})$ and hence \eqref{eq:la-smooth}.
 Clearly, this example easily extends to   cases where $\la$ has jumps and/or where $\la$ does not have a product form.
 
 In order to simplify the exposition in this paper, we further assume the following set of conditions.
 \begin{Assumption}\label{ass:main-1} In \eqref{eq:jumps}, we have $\ga(s,z)= z_1$ for all $s\geq0$ and $z\in\R^{d'}$ and $F(dz)=dz$. Moreover,
in the notation of Assumption~\ref{ass:main}, the following holds under $\Q$ for all $n\in\N$, $t<T_n$, $t'\in[t,t+1]$ and $z\in\R^{d'}$:
 		\begin{equation}\label{eq:extra}  
 			\lvert z_1\rvert+\lvert\ga^\si(t,z)\rvert\leq J_n(z),\quad
\E[(\ga^\si(t',z)-\ga^\si(t,z))^2\wedge1]^{1/2}   \leq  \lvert t'-t\rvert^{1/2} J_n(z).
 		\end{equation} 
 \end{Assumption}
Under Assumption~\ref{ass:main-1}, price jumps are essentially like those of a time-changed Lévy process (e.g., if $\la(t,z)=\la_t f(z)$, then $x$ has the same jumps as $L_{\int_0^t \la_s ds}$, where $L$ is a Lévy process, independent of $\la$, with Lévy measure $f(z)dz$). In addition, both the jumps of price and volatility are of finite variation. While Assumption~\ref{ass:main-1} greatly simplifies the asymptotic expansions below, they are not strictly necessary; we refer the reader to Theorem~3.2 in \cite{CT23_a} for a more general version of the following result, which is Corollary 3.4 in \cite{CT23_a}:
\begin{theorem}\label{thm:main0}
	Suppose that $F$ is a $C^2$-function on $(0,\infty)$ and that Assumptions~\ref{ass:main} and \ref{ass:main-1} are satisfied. Then
	on the set $\{\inf_{s\in[\underline t,\overline t]}\si_s^2>0\}$, there is a finite number of It\^o semimartingales $v^{(k)}_t$ and $C^{(k)}_t(u)$, $k=1,\dots, K$,  such that $C^{(k)}_t(u)$ is uniformly bounded in $u$ on compacts and $\lvert \Delta^n_i v^{(k)}_t\rvert+\lvert \Delta^n_i C^{(k)}_t(u) \rvert= O^\uc (\sqrt{\Den})$, uniformly in $i$, and the following holds as $\Den\to0$, $T\to0$ and $\Den/T\to0$ for the approximations from \eqref{eq:incr-V}: 	for all $i$ with $i\Den=O(T)$, we have
	\begin{equation}\label{eq:incr-V-2} \begin{split}
			\Delta^{n}_i V_{t,T}(u)&=\Delta^n_i V_t+ \sum_{k=1}^K \Delta^n_i v^{(k)}_t C^{(k)}_t(u) T^n_{i-1} -T^n_{i-1}\si_{t^n_i}\int_{\R^{d'}} \Delta^n_i \ga^\si(t,z)\la(t^n_i,z)dz \\
			&\quad+O^\uc(T^{N/2}) + o^\uc(\sqrt{\Den}T+  \Den/\sqrt{T}),\\ 
			\Delta^{n}_i V_{t,T,T'}(u)&=\Delta^n_i V_t+O^\uc(T^{N/2}) + o^\uc(\sqrt{\Den}T+  \Den/\sqrt{T}).
		\end{split}
	\end{equation}
In particular, if $\inf_{s\in[\un t,\ov t]} \{(\si^\si_s)^2+(\ov\si^\si_s)^2\}>0$, $T^{N/2}= o^\uc(\sqrt{\Den}T)$ and $\Den/\sqrt{T} =O^\uc(\sqrt{\Den}T)$,   the bias of $	\Delta^{n}_i V_{t,T,T'}(u)$ is of strictly smaller asymptotic order than the bias of $	\Delta^{n}_i V_{t,T}(u)$.
\end{theorem}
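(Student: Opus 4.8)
The plan is to route everything through a single small-time expansion of the conditional characteristic function $\call_{t,T}(u)$ and then push it through the three operations that define the estimand in \eqref{eq:incr-V}: applying $-\tfrac{2}{u^2}\log|\cdot|$, composing with $F$, and forming the backward increment. The structural simplification that makes this tractable is that $t^n_{i-1}+T^n_{i-1}=t^n_i+T^n_i=t+T$, so both $V_{t^n_{i-1},T^n_{i-1}}(u)$ and $V_{t^n_i,T^n_i}(u)$ are built from characteristic functions of the \emph{same} terminal value $x_{t+T}$, viewed from times differing by $\Den$ and with normalizations $u/\sqrt{T^n_{i-1}}$ and $u/\sqrt{T^n_i}$ that differ only by a factor $1+O(\Den/T)$. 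This is precisely what lets the leading parts of the two volatility transforms combine into $\Delta^n_i V_t$, leaving everything else to be catalogued as bias.

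The engine, and the genuinely hard part, is an expansion of $\call_{t,T}(u)=\E_t[e^{iu_T(x_{t+T}-x_t)}]$ in half-integer powers of $T$ up to a remainder $O^\uc(T^{N/2})$. I would obtain it by an iterated It\^o expansion that consumes the $N$ layers of the deep It\^o semimartingale one at a time: freezing $\si,\al,\ga,\la$ at time $t$ yields the Gaussian leading order $e^{-\frac12 u^2\si_t^2}$; expanding each frozen coefficient through its own semimartingale representation \eqref{eq:beta} produces the $O(\sqrt T)$ correction; expanding the newly appearing layer-two coefficients produces the $O(T)$ correction; and so on, each layer supplying one further half-power of $T$. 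The integrability and H\"older-type bounds of Assumption~\ref{ass:main} (parts~1--2) control the remainder of the continuous part, while the jump part is handled via the intensity and the smooth predictable approximation $\la_{t,t'}$ of part~3, using \eqref{eq:la-smooth} with \eqref{eq:beta-L2}; Assumption~\ref{ass:main-1} ($\ga(s,z)=z_1$, $F(dz)=dz$, finite-variation jumps) is what collapses the jump contributions into the clean leverage-type term appearing in \eqref{eq:incr-V-2}. This step is essentially Theorem~3.2 of \cite{CT23_a}, and reproducing it carries all the technical weight.

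Given that expansion, taking the modulus and logarithm is routine, and a key point drops out for free: because $\log|\cdot|$ keeps only the real part, the purely imaginary $O(\sqrt T)$ contributions (drift, diffusive leverage, and the compensator of the jumps) vanish, so $\si^2_{t,T}(u)=\si_t^2+b_t(u)\,T+r_{t,T}(u)$ with \emph{no} surviving $\sqrt T$ bias, where $r_{t,T}(u)$ gathers explicit It\^o-semimartingale biases of order $T^{3/2}$ and higher together with a genuine remainder of size $O^\uc(T^{N/2})$. A second-order Taylor expansion of $F$ around $\si_t^2$, legitimate on $\{\inf_{s}\si_s^2>0\}$ for $F\in C^2$, then gives $V_{t,T}(u)=V_t+\beta_t(u)\,T+\wt r_{t,T}(u)$ with $\beta_t(u)=F'(\si_t^2)b_t(u)$ an It\^o semimartingale in $t$; applying the same to the combination in \eqref{eq:V} annihilates the $T$-order coefficient, so the analogue of $\beta_t(u)$ for $V_{t,T,T'}(u)$ is identically zero.

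Finally I would difference. Writing the leading bias as $\beta_t(u)T^n$, the increment splits as
\[
\Delta^n_i\bigl[\beta_t(u)T^n\bigr]=\bigl(\Delta^n_i\beta_t(u)\bigr)T^n_{i-1}-\beta_{t^n_i}(u)\Den,
\]
where the first piece is exactly the explicit sum $\sum_k \Delta^n_i v^{(k)}_t\,C^{(k)}_t(u)\,T^n_{i-1}$ (the $v^{(k)}$ being the semimartingale components of $\beta$ sorted by their $u$-profiles $C^{(k)}_t(u)$, each with $\Delta^n_i v^{(k)}_t=O^\uc(\sqrt\Den)$) together with the separately displayed jump-leverage term $-T^n_{i-1}\si_{t^n_i}\int_{\R^{d'}}\Delta^n_i\ga^\si(t,z)\la(t^n_i,z)\,dz$, while the second piece is $O(\Den)=o^\uc(\Den/\sqrt T)$. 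The explicit biases of order $T^{3/2}$ and higher difference to $o^\uc(\sqrt\Den\,T)$, the normalization mismatch contributes the remaining $o^\uc(\Den/\sqrt T)$, and the genuine remainder does not improve under differencing (its $\Den$-increment is bounded by the sum of two $O^\uc(T^{N/2})$ terms), so it survives as $O^\uc(T^{N/2})$; collecting these yields the first line of \eqref{eq:incr-V-2}. Since $\beta\equiv0$ in the two-maturity case, no explicit term survives and only $O^\uc(T^{N/2})+o^\uc(\sqrt\Den T+\Den/\sqrt T)$ remains, which is the second line. The ``in particular'' is then immediate by order counting: under the stated rate conditions the single-maturity bias is of exact order $\sqrt\Den\,T$, nonvanishing once $(\si^\si)^2+(\ov\si^\si)^2>0$, whereas the two-maturity bias is $o^\uc(\sqrt\Den\,T)$. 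The main obstacle is unquestionably the $T^{N/2}$ characteristic-function expansion of the second paragraph; once it is in hand, the remaining steps are order bookkeeping.
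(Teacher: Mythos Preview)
Your proposal is correct and aligns with the paper's treatment: the paper does not supply an independent proof of this statement but simply records it as Corollary~3.4 of \cite{CT23_a} (with Theorem~3.2 there as the more general version), and you explicitly identify that same result as the engine driving your argument. Your surrounding bookkeeping---the observation that $t^n_i+T^n_i$ is constant in $i$, the vanishing of the $\sqrt{T}$ bias under $\log\lvert\cdot\rvert$, the Taylor step for $F$, the increment decomposition $\Delta^n_i[\beta_t(u)T^n]=(\Delta^n_i\beta_t(u))T^n_{i-1}-\beta_{t^n_i}(u)\Den$, and the cancellation of the $T$-order term in the two-maturity combination---is exactly the reduction that turns the characteristic-function expansion into \eqref{eq:incr-V-2}, so there is nothing to add.
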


\section{Proofs}\label{sec:proofs}

We shall deduce Theorems~\ref{thm:vov} and \ref{thm:lev} from a more general result concerning spot volatility estimation of an It\^o semimartingale observed with shrinking noise, which is of independent interest. To this end, let us look at two arbitrary  It\^o semimartingale processes
	\begin{align*}
		V_t &= V_0 + \int_0^t\alpha^V_sds + \int_0^t\sigma_s^V d\calw_s + \int_0^t\int_{\mathbb{R}}\gamma^V(s,z)\bone_{\{|\gamma^V(s,z)|\leq 1\}}(\un\pf-\un\qf)(ds,dz) \\&\quad+ \int_0^t\int_{\mathbb{R}}\gamma^V(s,z)\bone_{\{|\gamma^V(s,z)|> 1\}}\un\pf(ds,dz),\\
		X_t &= X_0 + \int_0^t\alpha^X_sds + \int_0^t\sigma_s^X d\calw_s +  \int_0^t\ov\sigma_s^X d\ov\calw_s +\int_0^t\int_{\mathbb{R}}\gamma^X(s,z)\bone_{\{|\gamma^X(s,z)|\leq 1\}}(\un\pf-\un\qf)(ds,dz) \\&\quad+ \int_0^t\int_{\mathbb{R}}\gamma^X(s,z)\bone_{\{|\gamma^X(s,z)|> 1\}}\un\pf(ds,dz),
	\end{align*}
which are assumed to satisfy the following regularity assumptions:

\bass \label{ass:Y}
Both   $V$ and $X$ are defined on $(\Om, \calf, \F,\P)$ with the following properties:
\begin{enumerate}
	\item The processes $\alpha^V$ and $\al^X$ are locally bounded.
	\item The processes $\si^V$, $\si^X$ and $\ov \si^X$ are c\`adl\`ag and $\calw$ and $\ov\calw$ are two independent standard $\F$-Brownian motions. Moreover, $\un\pf$ is an $\F$-Poisson random measure with intensity measure $\un \qf(ds,dz)=ds dz$.
	\item There is a localizing sequence $(T'_n)_{n\in\N}$) of stopping times and  a deterministic nonnegative function $J'_n(z)$ for each $n\in\N$  satisfying $\int_{\mathbb{R}} (J'_n(z))^{r'} \la(dz) < \infty$ with some $r'\in[0,\frac 43)$   such that $|\gamma^V(s,z)| \wedge 1 \leq J'_n(z)$ and $|\gamma^X(s,z)| \wedge 1 \leq J'_n(z)$  for all $(\omega, s, z)$ with $s \leq T'_n(\omega)$. 
\end{enumerate} 
\eass

We also introduce  observation errors $\epsilon_{i\Delta_n}^n$ for the process $V$, which are defined at the observation times $0,\Delta_n, 2\Delta_n,\ldots$ and satisfying the following assumptions:

\bass\label{ass:eps}
The observation errors $\epsilon_{i\Delta_n}^n$ of $V$ are defined on $(\ov \Om,\ov\calf,\ov\P)$ and $\calf$-conditionally independent   as $i$ varies. They further satisfy
\begin{equation}\label{eq:prop-eps}\begin{split}
		\mathbb{E}^{\ov \P} [\epsilon_{i\Delta_n}^n\mid\mathcal{F} ] \ &=0,\qquad \bigl\lvert\mathbb{E}^{\ov \P}  [(\epsilon_{i\Delta_n}^n)^2\mid \mathcal{F} ] - v_{i\Delta_n}\delta_n\bigr\rvert\leq C_{i\Delta_n}\delta_n\iota_n,\\
		\mathbb{E}^{\ov \P} [\lvert\epsilon_{i\Delta_n}^n\rvert^p\mid \mathcal{F}]&\leq K_pC_{i\Delta_n}\delta_n^{p/2}\qquad \text{for all } p>2,
\end{split}\end{equation}
for some deterministic sequences  $\delta_n$ and $\iota_n$  shrinking to zero as $n\rightarrow\infty$, some $\F$-adapted processes $v_t$  and $C_t$ with c\`adl\`ag paths, and some finite constant $K_p$. 	Moreover,  the limit
$\phi=\lim_{n\to\infty}\frac{\Delta_n}{\Delta_n+\delta_n}$
exists in $[0,1]$.
\eass

We denote   increments of  noisy observations of $V$ by 
\begin{equation}
	\Delta_i^n\widehat{V}_t  = \Delta_i^n V + \Delta^n_i \eps^n = V_{i\Den}-V_{(i-1)\Den}+ \epsilon^n_{i\Delta_n}-\epsilon^n_{(i-1)\Delta_n}.
\end{equation}
The process $X$ is assumed to be observed without errors.

\bprop\label{prop:auto}
Suppose that Assumption~\ref{ass:Y} holds for the processes $V$ and $X$ and Assumption~\ref{ass:eps} holds for the observation errors $\epsilon^n_{i\Delta_n}$. Consider a  deterministic integer sequence  $k_n$   satisfying $k_n\to\infty$  and
the truncation function $\tau_n(x) = x\bone_{\{\lvert x\rvert\leq \upsilon_n\}}$ with some $\upsilon_n>0$ such that
\begin{equation}\label{eq:cond_ell_n-1} 
k_n^2\Den\to0,\quad	
\Den=O(\upsilon_n^{2r'}),\quad (\Den+\delta_n)/\upsilon_n^2 = O(k_n^{-\iota})
\end{equation}
for some $\iota>0$ (and the same $r'$ as in Assumption~\ref{ass:Y}3). 
\begin{enumerate}
	\item The estimators 
	\begin{equation}\label{eq:auto_1} \begin{split}
			(\wh \si^{n,V}_0)^2=(\wh \si^{n,V}_0(\upsilon_n))^2 &=\frac{1}{k_n\Delta_n}\sum_{i=2}^{k_n}\Bigl((\tau_n(\Delta_i^n\widehat{V}))^2 + 2\tau_n(\Delta_{i-1}^n\widehat{V})\tau_n(\Delta_i^n\widehat{V})\Bigr),\\
			\wh \eta^n_0 = \wh \eta^n_0(\upsilon_n)& = \frac{1}{k_n\Delta_n}\sum_{i=1}^{k_n} \tau_n(\Delta_i^n\widehat{V})\tau_n(\Delta_i^n X)
		\end{split}
	\end{equation}
	satisfy the CLTs
	\begin{equation}\label{eq:auto_2} 	
		\frac{\sqrt{k_n}\Den}{\Delta_n+\delta_n} \Bigl( (\wh \si^{n,V}_0)^2- ( \si^{V}_0)^2\Bigr)  \xrightarrow{\mathcal{L}-s}\sqrt{\Sigma_0(\phi)+2\Sigma_1(\phi)}Z 
	\end{equation}
	and
	\begin{equation}\label{eq:auto-2-2}
		\frac{\sqrt{k_n}\Den}{\Delta_n+\delta_n} \Bigl(  \wh\eta^n_0 -\si_0^V\si_0^X\Bigr)  \xrightarrow{\mathcal{L}-s}\sqrt{\Sigma'(\phi)}Z', 
	\end{equation}
	where $Z$ and $Z'$ are standard normal variables, defined on a product extension of $(\ov \Om,\ov\calf,\ov\P)$  and independent from it, and
	\begin{equation}\label{eq:auto_3}\begin{split}
			\Sigma_0(\phi) &= 6(\si_0^V)^4\phi^2+16(\si_0^V)^2v_0 \phi(1-\phi)+22v_0^2(1-\phi)^2,\\
			\Sigma_1(\phi)& = -4(\si_0^V)^2v_0 \phi(1-\phi)+9v_0^2(1-\phi)^2,\\
			\Sigma'(\phi)&= (\si_0^V)^2[2(\si_0^X)^2+ (\ov\si_0^X)^2]\phi^2 + 2[(\si_0^X)^2+(\ov\si_0^X)^2]v_0\phi(1-\phi).
		\end{split}
	\end{equation}  
\item Defining $q^n_i(\wh V)=(\tau_n(\Delta_i^n\widehat{V}))^2 + 2\tau_n(\Delta_{i-1}^n\widehat{V})\tau_n(\Delta_i^n\widehat{V})$ and
\begin{equation}\label{eq:auto_4}
	\begin{split}
		\widehat{\Sigma}^n_0 &= \frac{1}{k_n(\Delta_n+\delta_n)^2}\Biggl(\sum_{i=2}^{k_n}\bigl(q^n_i(\wh V)\bigr)^2 - \sum_{i=4}^{k_n}q^n_{i-2}(\wh V)q^n_i(\wh V)\Biggr),\\ \widehat{\Sigma}^n_1 &= \frac{1}{k_n(\Delta_n+\delta_n)^2}\Biggl(\sum_{i=3}^{k_n}q^n_{i-1}(\wh V)q^n_i(\wh V) - \sum_{i=4}^{k_n}q^n_{i-2}(\wh V)q^n_i(\wh V)\Biggr)
	\end{split}
\end{equation}
and
\begin{equation}\label{eq:Sigma-prime}\begin{split}
		\widehat{\Sigma}^{\prime n} &= \frac{1}{k_n(\Delta_n+\delta_n)^2} \sum_{i=1}^{k_n} (\tau_n(\Delta^n_i \wh V))^2(\tau_n(\Delta^n_i X))^2\\
		&\quad  -\frac{1}{k_n(\Delta_n+\delta_n)^2} \sum_{i=3}^{k_n}\tau_n (\Delta^n_{i} X)\tau_n (\Delta^n_{i} \wh V)\tau_n (\Delta^n_{i-2} X)\tau_n (\Delta^n_{i-2} \wh V),
	\end{split}
\end{equation}
we have
\begin{equation}\label{eq:var-1}\begin{split}
		\widehat{\Sigma}^n_0-\frac{2}{k_n(\Den+\delta_n)^2}\sum_{i=1}^{k_n} \E[(\eps^n_{i\Den})^4] &\stackrel{\P}{\longrightarrow}\Sigma_0(\phi),\\
		\widehat{\Sigma}^n_1+\frac{1}{k_n(\Den+\delta_n)^2}\sum_{i=1}^{k_n} \E[(\eps^n_{i\Den})^4]&\stackrel{\P}{\longrightarrow}\Sigma_1(\phi)
	\end{split}
\end{equation}
and
\begin{equation}\label{eq:Sigma-prime-2}
	\wh \Sigma^{\prime n} \stackrel{\P}{\longrightarrow}\Sigma'(\phi).
\end{equation}
In particular,
\begin{equation}\label{eq:var-2} 
	\widehat{\Sigma}^n_0 + 2\widehat{\Sigma}^n_1 \stackrel{\P}{\longrightarrow} \Sigma_1(\phi)+2\Sigma_1(\phi).
\end{equation}
\item If $\phi=0$ in Assumption~\ref{ass:eps}, the limits in \eqref{eq:auto-2-2} and \eqref{eq:Sigma-prime-2} become degenerate. In this case, the nondegenerate limit theorems are given by
\begin{equation}\label{eq:auto-2-2-phi0}
	\sqrt{\frac{ k_n\Den}{\Delta_n+\delta_n}} \Bigl(  \wh\eta^n_0 -\si_0^V\si_0^X\Bigr)  \xrightarrow{\mathcal{L}-s}\sqrt{\Sigma''}Z', \qquad  \frac{\Den+\delta_n}{\Den}\wh \Sigma^{\prime n} \stackrel{\P}{\longrightarrow}\Sigma''. 
\end{equation}
where $\Sigma''=2[(\si_0^X)^2+(\ov\si^X_0)^2]v_0$.
\end{enumerate}

\eprop

	\begin{proof}  We first consider the case $\upsilon_n=\infty$ (i.e., no truncation). Then
	\begin{align*}
		&(\Delta_i^n\widehat{V})^2 + 2\Delta_{i-1}^n\widehat{V}\Delta_i^n\widehat{V}\\	&\quad=(\Delta_i^nV)^2 +2\Delta^n_{i-1}V\Delta^n_i V + 2(\Delta^n_i V\Delta^n_i\eps^n +\Delta^n_{i-1} \eps^n \Delta^n_i V+\Delta^n_{i-1}V\Delta^n_i\eps^n) \\
		&\quad\quad+ (\eps^n_{i\Den})^2 - (\eps^n_{(i-1)\Den})^2 + 4\eps^n_{(i-1)\Den}\eps^n_{i\Den} -2\eps^n_{(i-2)\Den}\eps^n_{i\Den} +2\eps^n_{(i-2)\Den}\eps^n_{(i-1)\Den}
	\end{align*}
and
\begin{equation*}
 \Delta^n_i \wh V \Delta^n_i X = \Delta^n_i V\Delta^n_i X + \Delta^n_i \eps^n \Delta^n_i X.
\end{equation*}
	By standard localization arguments, there is no loss of generality to assume that the processes $v$ and $C$ in Assumption~\ref{ass:eps} are bounded by a finite constant, say, $K$. 
	Then
	\begin{align*}
		\E\biggl[\biggl\lvert\frac{1}{\sqrt{k_n}(\Den+\delta_n)} \sum_{i=2}^{k_n}  \Bigl((\eps^n_{i\Den})^2 - (\eps^n_{(i-1)\Den})^2\Bigr) \biggr\rvert\biggr]&\leq \frac{\E[(\eps^n_{k_n\Den})^2+(\eps^n_{\Den})^2]}{\sqrt{k_n}(\Den+\delta_n)}\\
		&\leq  \frac{2K(1+\iota_n)\delta_n}{\sqrt{k_n}(\Den+\delta_n)} \to0. 
	\end{align*}
	Similarly, the expressions $\sum_{i=2}^{k_n} (\Delta^n_i V\Delta^n_i\eps^n +\Delta^n_{i-1} \eps^n \Delta^n_i V+\Delta^n_{i-1}V\Delta^n_i\eps^n)$ and $\sum_{i=2}^{k_n-2} (\Delta^n_{i-1} V - \Delta^n_{i+2}V)\eps^n_{i\Den}$ as well as $\sum_{i=1}^{k_n} \Delta^n_i \eps^n \Delta^n_i X$ and $ \sum_{i=2}^{k_n-2}(\Delta^n_i X-\Delta^n_{i+1} X)\eps^n_{i\Den}$  only differ by boundary terms that are negligible. Moreover, by Lemma 13.3.12 in  \cite{JP12}, we have $\frac{\sqrt{k_n}\Den}{\Delta_n+\delta_n}  ( \frac{1}{k_n\Den}\int_0^{k_n\Den}(\si_s^V)^2ds- ( \si^{V}_0)^2 ) \stackrel{\P}{\longrightarrow}0$ and 
	$\frac{\sqrt{k_n}\Den}{\Delta_n+\delta_n}  (  \frac{1}{k_n\Den}\int_0^{k_n\Den} \si_s^V\si_s^Xds-\si_0^V\si_0^X )\stackrel{\P}{\longrightarrow}0$.
	Therefore, in order to prove  \eqref{eq:auto_2} and \eqref{eq:auto-2-2}, it suffices to show that 
	\begin{align*}
		&\frac{1}{\sqrt{k_n}(\Den+\delta_n)}\sum_{i=2}^{k_n-2} \biggl((\Delta_i^n {V})^2-\int_{(i-1)\Den}^{i\Den} (\si_s^V)^2 ds + 2\Delta^n_{i-1}V\Delta^n_iV \\
		&\quad + 2(\Delta^n_{i-1} V - \Delta^n_{i+2}V)\eps^n_{i\Den}+ 6\eps^n_{(i-1)\Den}\eps^n_{i\Den} -2\eps^n_{(i-2)\Den}\eps^n_{i\Den} \biggr)
	\end{align*}
and 
\[ \frac{1}{\sqrt{k_n}(\Den+\delta_n)}\sum_{i=2}^{k_n-2}  \biggl(  \Delta^n_i V\Delta^n_i X - \int_{(i-1)\Den}^{i\Den}\si_s^V\si_s^X ds + (\Delta^n_i X-\Delta^n_{i+1} X)\eps^n_{i\Den} \biggr) \]
	  converge stably in law to the right-hand sides of \eqref{eq:auto_2} and \eqref{eq:auto-2-2}, respectively. But this follows from
	\begin{equation}\label{eq:conv1} 
		\frac{1}{\sqrt{k_n}(\Den+\delta_n)}\sum_{i=2}^{k_n-2} 	\begin{pmatrix} (\Delta_i^n {V})^2 -\int_{(i-1)\Den}^{i\Den} (\si_s^V)^2 ds\\ \Delta^n_i V\Delta^n_i X-\int_{(i-1)\Den}^{i\Den}\si_s^V\si_s^X ds\\  \Delta_{i-1}^n {V}\Delta_i^n{V} \\ (\Delta^n_{i-1} V- \Delta^n_{i+2}V)\eps^n_{i\Den} \\ (\Delta^n_{i} X- \Delta^n_{i+1}X)\eps^n_{i\Den}  \\ \eps^n_{(i-1)\Den}\eps^n_{i\Den}\\ \eps^n_{(i-2)\Den}\eps^n_{i\Den} \end{pmatrix}\xrightarrow{\mathcal{L}-s} \begin{pmatrix} \sqrt 2  (\si_0^V)^2 \phi Z_1\\ \sqrt 2 \si_0^V\si_0^X \phi Z_1 + \si_0^V\ov\si_0^X \phi Z_2 \\ (\si_0^V)^2 \phi Z_3\\ \sqrt{2(\si_0^V)^2v_0\phi(1-\phi)}Z_4 \\ \sqrt{2[(\si_0^X)^2+(\ov\si_0^X)^2]v_0\phi(1-\phi)}Z_5 \\  v_0 (1-\phi)Z_6 \\ v_0(1-\phi)Z_7\end{pmatrix}, \end{equation}
	where $Z_1,\dots,Z_7$ are independent standard normal random variables. In \eqref{eq:conv1}, the convergence of the first three coordinates can be shown by following the proof of Theorem~13.3.3  in \cite{JP12} and the proof of Theorem~8 in \cite{ALTZ21}. In particular, from the proofs we know that only the diffusive part of $V$ enters the limit, that is, we may assume that $\al^V\equiv \ga^V\equiv 0$. From here, we can then use Theorem~2.2.15 in \cite{JP12} to show \eqref{eq:conv1}. 
	
	To show \eqref{eq:auto_2} and \eqref{eq:auto-2-2} for general $\upsilon_n$,  we argue similarly to \cite{JP12}, Lemma~13.3.10: by localization, we may assume without loss of generality that the processes $\al^V$, $\si^V$ and $\int_\R \lvert (\lvert\ga^V(t,z)\rvert^{r'}\bone_{\{ \lvert\ga^V(t,z)\rvert\leq 1\}}+\lvert\ga^V(t,z)\rvert\bone_{\{ \lvert\ga^V(t,z)\rvert> 1\}})dz$ are uniformly bounded. Thus, writing $V'_t=V_t-V''_t$ and $V''_t=\int_0^t\int_\R \ga^V(s,z)(\un\pf-\un\qf)(ds,dz)$, we have a constant $L>0$ for which
	\begin{align*}
		&\ov \P\bigl (( \wh \si^{n,V}_0(\infty))^2 \neq (\wh\si^{n,V}_0(\upsilon_n))^2\bigr) \vee \ov \P\bigl (\wh \eta^{n}_0(\infty) \neq \wh\eta^{n}_0(\upsilon_n)\bigr)	\\
		&\qquad\leq \sum_{i=1}^{k_n} \ov\P(\lvert \Delta^n_i \wh V\rvert>\upsilon_n) \leq \sum_{i=1}^{k_n} \ov\P(\lvert \Delta^n_i   V'+\Delta^n_i\eps^n\rvert>\upsilon_n/2) + \sum_{i=1}^{k_n}  \P(\lvert \Delta^n_i   V''\rvert>\upsilon_n/2)  \\
		&\qquad\leq L^p K_p k_n\upsilon_n^{-p}(\Den+\delta_n)^{p/2} + k_n\Den\upsilon_n^{-{r'}},
	\end{align*}
	where in the last step $p>0$ is arbitrary and we applied Chebyshev's inequality, Assumption~\ref{ass:eps} and Equation~(13.2.23) in \cite{JP12}. Writing $k_n\Den\upsilon_n^{-{r'}} =(k_n^2\Den)^{1/2}(\Den/\upsilon_n^{2{r'}})^{1/2}  $, we see that the first two conditions in \eqref{eq:cond_ell_n-1} imply $k_n\Den\upsilon_n^{-{r'}}\to0$. Thanks to the last condition in \eqref{eq:cond_ell_n-1}, also the first term in the last line of the previous display goes to $0$ as soon as $p$ is chosen sufficiently large.

	If $\upsilon_n=\infty$,
	the statements concerning $\wh\Si^n_0$, $\wh \Si^n_1$ and $\wh\Si^{\prime n}$ can be shown via a    straightforward computation of the law of large number limits of the expressions in \eqref{eq:auto_4} and \eqref{eq:Sigma-prime}. For general $\upsilon_n$, everything remains the same because, as before, truncation has no asymptotic effect. Finally, the proof of \eqref{eq:auto-2-2-phi0} is completely analogous to that of \eqref{eq:auto-2-2} and \eqref{eq:Sigma-prime-2}.
\end{proof}

\begin{proof}[Proof of Theorem~\ref{thm:vov} and Theorem~\ref{thm:lev}] We only analyze $\wh{VV}^n_{t,T}(u)$; the proof for $\wh{VV}^n_{t,T,T'}(u)$, $\wh{LV}^n_{t,T}(u)$ and $\wh{LV}^n_{t,T,T'}(u)$ is similar and therefore omitted.
	On the set $\{\inf_{s\in[\underline t,\overline t]} \si_s^2 >0\}$, we have by a second-order Taylor expansion that
	\begin{align*}
		\Delta^n_i \wh V_{t,T}(u)	&=F\bigl(-\tfrac2{u^2} \log \lvert \wh\call_{t^n_{i-1},T^n_{i-1}}(u)\rvert\bigr)-F\bigl(-\tfrac2{u^2} \log \lvert \wh\call_{t^n_{i},T^n_{i}}(u)\rvert\bigr) \\
		&=\Delta^n_i V_{t,T}(u)+\eps^{n,i}_{t,T}(u)\\
		&\quad + O^\uc\bigl(\lvert \wh \call_{t^n_{i-1},T^n_{i-1}}(u)-\call_{t^n_{i-1},T^n_{i-1}}(u)\rvert^2+\lvert \wh \call_{t^n_{i},T^n_{i}}(u)-\call_{t^n_{i},T^n_{i}}(u)\rvert^2\bigr) ,
	\end{align*}
	where 
	\begin{align*}
		\eps^{n,i}_{t,T}(u)	&=-\frac{2}{u^2}F'\bigl(\si^2_{t^n_{i-1},T^n_{i-1}}(u)\bigr)\frac{\Re\bigl( \call_{t^n_{i-1},T^n_{i-1}}(u)\bigr)}{\lvert \call_{t^n_{i-1},T^n_{i-1}}(u)\rvert^2}\Re\bigl( \wh \call_{t^n_{i-1},T^n_{i-1}}(u)-\call_{t^n_{i-1},T^n_{i-1}}(u)\bigr) \\
		&\quad+\frac{2}{u^2}F'\bigl(\si^2_{t^n_{i},T^n_{i}}(u)\bigr)\frac{\Re\bigl( \call_{t^n_{i},T^n_{i}}(u)\bigr)}{\lvert \call_{t^n_{i},T^n_{i}}(u)\rvert^2}\Re\bigl( \wh \call_{t^n_{i},T^n_{i}}(u)-\call_{t^n_{i},T^n_{i}}(u)\bigr) \\
		&\quad-\frac{2}{u^2}F'\bigl(\si^2_{t^n_{i-1},T^n_{i-1}}(u)\bigr)\frac{\Im\bigl( \call_{t^n_{i-1},T^n_{i-1}}(u)\bigr)}{\lvert \call_{t^n_{i-1},T^n_{i-1}}(u)\rvert^2}\Im\bigl( \wh \call_{t^n_{i-1},T^n_{i-1}}(u)-\call_{t^n_{i-1},T^n_{i-1}}(u)\bigr) \\
		&\quad+\frac{2}{u^2}F'\bigl(\si^2_{t^n_{i},T^n_{i}}(u)\bigr)\frac{\Im\bigl( \call_{t^n_{i},T^n_{i}}(u)\bigr)}{\lvert \call_{t^n_{i},T^n_{i}}(u)\rvert^2}\Im\bigl( \wh \call_{t^n_{i},T^n_{i}}(u)-\call_{t^n_{i},T^n_{i}}(u)\bigr). 
	\end{align*}
	By Theorem~2 in \cite{T21}, we know that
	\[ O^\uc\bigl(\lvert \wh \call_{t^n_{i-1},T^n_{i-1}}(u)-\call_{t^n_{i-1},T^n_{i-1}}(u)\rvert^2+\lvert \wh \call_{t^n_{i},T^n_{i}}(u)-\call_{t^n_{i},T^n_{i}}(u)\rvert^2\bigr) =O^\uc (\delta/\sqrt{T}).
	\]
	So  in conjunction with Theorem~\ref{thm:main0} and  Assumption~\ref{ass:hf}, we obtain
	\begin{equation}\label{eq:Delta-Vhat} \begin{split}
			\Delta^n_i \wh V_{t,T}(u)&=\Delta^n_i V_{t,T}(u)+\eps^{n,i}_{t,T}(u)+O^\uc(\delta/\sqrt{T}  )\\
&=\Delta^n_i V_t + \eps^{n,i}_{t,T}(u)+O^\uc(\delta/\sqrt{T})+O^\uc(T^{N/2})+O^\uc(\sqrt{\Den}T)+o^\uc(\Den/\sqrt{T})\\
			&=\Delta^n_i V_t + \eps^{n,i}_{t,T}(u)+O^\uc(\delta/\sqrt{T})+o^\uc(\sqrt{\Den/k_n}).
		\end{split}
	\end{equation}
(If we expand $\Delta^n_i \wh V_{t,T,T'}(u)$ in a similar fashion, the $O^\uc(\sqrt{\Den}T)$-term becomes $o^\uc(\sqrt{\Den}T)$, which is why it suffices to assume $k_nT=O(1)$ instead of $k_nT\to0$ for the convergence of $\wh{VV}^n_{t,T,T'}(u)$ in \eqref{eq:main-conv-alt} as well as the convergence of  
$\wh{LV}^n_{t,T,T'}(u)$ in    \eqref{eq:main-conv-22}.)
	
	Next, consider the error variables $\eps^{n,i}_{t,T}(u)$ more closely. 	By \eqref{eq:L_hat}, we have 
	\begin{equation}
		\widehat{\mathcal{L}}_{s,T}(u) - \mathcal{L}_{s,T}(u) = Z_{s,T}(u)+\mathcal{R}_{s,T}(u) 
	\end{equation}
	for any $s\in[\underline{t},\overline t]$, where
	\begin{equation}\label{eq:Z}\begin{split}
			Z_{s,T}(u) = -\left(\frac{u^2}{T}+i\frac{u}{\sqrt{T}}\right)e^{-x_s}\sum_{j=2}^{N_{s,T}}e^{(iu/\sqrt{T}-1)(k_{j-1,s,T}-x_s)}\epsilon_{j-1,s,T}\delta_{j,s,T}
	\end{split}\end{equation}
	and $\mathcal{R}_{s,T}(u) =  (\frac{u^2}{T}+i\frac{u}{\sqrt{T}} )e^{-x_s}\overline{\mathcal{R}}_{s,T}(u)$ with
	\begin{align*}
		&\overline{\mathcal{R}}_{s,T}(u) = \left(\int_{-\infty}^{\underline{k}_{s,T}}e^{(iu/\sqrt{T}-1)(k-x_s)}O_{s,T}(k)dk+\int_{\overline{k}_{s,T}}^{\infty}e^{(iu/\sqrt{T}-1)(k-x_s)}O_{s,T}(k)dk\right)\\
		&\quad -\sum_{j=2}^{N_{s,T}}\int_{k_{j-1,s,T}}^{k_{j,s,T}}\left(e^{(iu/\sqrt{T}-1)(k_{j-1,s,T}-x_s)}O_{s,T}(k_{j-1,s,T})-e^{(iu/\sqrt{T}-1)(k-x_s)}O_{s,T}(k)\right)dk.
	\end{align*}
	Using (\ref{bounds_1}) and (\ref{bounds_2}) from Lemma~\ref{lemma:bounds} below together with Assumption~\ref{ass:C}3, we can  bound 
	\begin{equation}
		\left|\mathcal{R}_{s,T}(u)\right| \leq C_s(u)\left(e^{2\underline{k}_{s,T}}+e^{-2\overline{k}_{s,T}} + \frac{\delta}{\sqrt{T}}\log T\right)=O^\uc\biggl(\frac{\delta}{\sqrt{T}}\log T\biggr).
	\end{equation}
	This shows that 
	\begin{equation}\label{eq:Delta-Vhat-2} \begin{split}
			\Delta^n_i \wh V_{t,T}(u)
			&=\Delta^n_i V_t + \ov\eps^{n,i}_{t,T}(u)+O^\uc\biggl(\frac{\delta}{\sqrt{T} }\log T +\Den\biggr)+o^\uc(\sqrt{\Den/k_n}),
		\end{split}
	\end{equation}
	where
	\begin{equation}\label{eq:ov-eps} 
		\begin{split}
			\ov	\eps^{n,i}_{t,T}(u)	&=-\frac{2}{u^2}F'\bigl(\si^2_{t^n_{i-1},T^n_{i-1}}(u)\bigr)\\
			&\quad\quad\times\frac{\Re\bigl( \call_{t^n_{i-1},T^n_{i-1}}(u)\bigr)\Re \bigl(Z_{t^n_{i-1},T^n_{i-1}}(u)\bigr)+\Im\bigl( \call_{t^n_{i-1},T^n_{i-1}}(u)\bigr)\Im \bigl( Z_{t^n_{i-1},T^n_{i-1}}(u)\bigr)}{\lvert \call_{t^n_{i-1},T^n_{i-1}}(u)\rvert^2} \\
			&\quad+\frac{2}{u^2}F'\bigl(\si^2_{t^n_{i},T^n_{i}}(u)\bigr)\frac{\Re\bigl( \call_{t^n_{i},T^n_{i}}(u)\bigr)\Re \bigl(Z_{t^n_{i},T^n_{i}}(u)\bigr)+\Im\bigl( \call_{t^n_{i},T^n_{i}}(u)\bigr)\Im \bigl(Z_{t^n_{i},T^n_{i}}(u)\bigr)}{\lvert \call_{t^n_{i},T^n_{i}}(u)\rvert^2}.
		\end{split}\raisetag{-3\baselineskip}
	\end{equation}
	
	Suppose that $\upsilon_n=\infty$ (i.e., no truncation). It is then an easy consequence of the assumptions $k_n(\delta/\sqrt{T})\log T\to0$ and $k_n\Den\to0$  that the terms within $O^\uc((\delta/\sqrt{T} )\log T +\Den)+o^\uc(\sqrt{\Den/k_n})$ are asymptotic negligible for the  convergence of $\wh{VV}^n_{t,T}(u)$ in \eqref{eq:main-conv-alt}. For general $\upsilon_n$, as long as \eqref{eq:cond_ell_n-1} is satisfied, one can argue as in the proof of Proposition~\ref{prop:auto} to show that truncation has no asymptotic impact. Therefore, the  convergence of $\wh{VV}^n_{t,T}(u)$ in \eqref{eq:main-conv-alt} follows from Proposition~\ref{prop:auto} (applied to $V$ and with  $\delta_n=\delta/\sqrt{T}$), provided we can verify the conditions listed in Assumption~\ref{ass:eps} for $\ov\eps^{n,i}_{t,T}(u)$. The $\calf$-conditional independence of $\ov\eps^{n,i}_{t,T}(u)$ and the fact that $\E^{\ov \P}[\ov\eps^{n,i}_{t,T}(u)]=0$ follow from \eqref{eq:Z}, \eqref{eq:ov-eps} and Assumption~\ref{ass:C}4. For the other two properties in \eqref{eq:prop-eps}, we may as usual assume that the process $C_t$ in Lemma~\ref{lemma:bounds} and the coefficients of $x_t$ are all uniformly bounded. Thus, if 
	
\begin{equation}\label{eq:Psi_tilde}
\wt \Phi(k)=\vp(k)+\lvert k\rvert\Phi(-\lvert k\rvert),
\end{equation}
where $\vp$ and $\Phi$ are the standard normal density and distribution function, respectively, the proof of Theorem~3 in \cite{T19} shows that one can find a nonnegative c\`adl\`ag process $K_t$ and finite constants $K(p)$ such that 
\begin{equation}\label{eq:aux}
	\begin{split}
		&\left|\frac{\sqrt{T}}{\delta}\mathbb{E}[\Re(Z_{t_i^n,T_i^n}(u))^2\mid \mathcal{F}] - u^4\sigma_{t_i^n}^3\zeta_{t_i^n}^2(0)\int_{\mathbb{R}}\sin^2(\sigma_{t_i^n}uk)\widetilde{\Phi}^2(k)dk\right|\leq K_{t^n_i} \iota_n,\\&
		\left|\frac{\sqrt{T}}{\delta}\mathbb{E}[\Im(Z_{t_i^n,T_i^n}(u))^2\mid \mathcal{F}] - u^4\sigma_{t_i^n}^3\zeta_{t_i^n}^2(0)\int_{\mathbb{R}}\cos^2(\sigma_{t_i^n}uk)\widetilde{\Phi}^2(k)dk\right|\leq K_{t^n_i}  \iota_n,\\&
		\left|\frac{\sqrt{T}}{\delta}\mathbb{E}[\Re(Z_{t_i^n,T_i^n}(u))\Im(Z_{t_i^n,T_i^n}(u))\mid \mathcal{F}]\right|\leq K_{t^n_i}  \iota_n,\\
		&\mathbb{E}[|Z_{t_i^n,T_i^n}(u)|^{p}\mid \mathcal{F}]\leq K(p)K^p_{t^n_i} \left(\frac{\sqrt{\delta}}{T^{1/4}}\right)^{p},\qquad p>2,
	\end{split}
\end{equation}
locally uniformly in $u$, with
a deterministic sequence $\iota_n\downarrow0$.
\end{proof}

For the proof of Theorem~\ref{thm:vov}, we used the following lemma,  which is Lemma~3 in \cite{T21}:
\begin{lemma}\label{lemma:bounds}
Suppose that Assumptions \ref{ass:main} and \ref{ass:C}1 hold. For any $t>0$, there is a process  $C$ with c\`{a}dl\`{a}g paths   such that on the set $\{\inf_{s\in[\underline t,\overline t]} \si_s^2>0\}$ we have 
\begin{equation}\label{bounds_1}
	O_{s,T}(k)\leq C_s\biggl(Te^{3k}\bone_{\{k-x_s<-1\}}+Te^{-k}\bone_{\{k-x_s>1\}}+\biggl(\sqrt{T}\wedge\frac{T}{|k-x_s|}\biggr)\bone_{\{|k-x_s|<1\}}\biggr)
\end{equation}
and
\begin{equation}\label{bounds_2}
	|O_{s,T}(k_1)-O_{s,T}(k_2)|\leq C_s\left[\frac{T}{(k_2-x_s)^4}\wedge\frac{T}{(k_2-x_s)^2}\wedge 1\right]|e^{k_1}-e^{k_2}|
\end{equation}
for $s\in[\underline{t},\overline{t}]$, $k\in\R$, $k_1<k_2<x_s$ or $k_1>k_2>x_s$, and sufficiently small $T$.
\end{lemma}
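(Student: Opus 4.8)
The plan is to prove both displays directly from the definition \eqref{eq:opt} of $O_{s,T}(k)$ as a $\Q$-conditional expectation of an out-of-the-money payoff, keeping track only of conditional moment and tail estimates for the increment $\Delta x=x_{s+T}-x_s$. I would first localize along the stopping times $(T_n)$ of Assumption~\ref{ass:main}, so that the coefficients of $x$ and the jump functionals entering \eqref{a3:1} may be assumed uniformly bounded; in particular $\E_s[e^{p\lvert x_{s+T}\rvert}]$ is finite and, for small $T$, $\E_s[(\Delta x)^2]+\E_s[(\Delta x)^4]\lesssim T$ (the diffusive parts contribute $O(T)$ and $O(T^2)$, the finite-variation jumps contribute $O(T)$ through $\int(\lvert z\rvert^2\vee\lvert z\rvert^4)\la\,dz$, itself controlled by $\int[(e^{3\lvert z\rvert}-1)\vee\lvert z\rvert^2]\la\,dz$). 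The only genuinely path-dependent object I retain is the process $C_s$, built from $e^{c\lvert x_s\rvert}$ and from $\int e^{3\lvert z\rvert}\la(s,z)\,dz$; its c\`adl\`ag property and finiteness follow from \eqref{a3:1}.

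For the upper bound \eqref{bounds_1} I would linearize the payoff via the mean value inequality $(e^a-e^b)^+\le e^{a\vee b}(a-b)^+$. \textbf{Near the money} ($\lvert k-x_s\rvert<1$), writing $\kappa=k-x_s$, this gives $O_{s,T}(k)\le e^{x_s}\E_s[e^{(\Delta x)\vee\kappa}(\Delta x-\kappa)^+]$ for a call and the symmetric $O_{s,T}(k)\le e^{k}\E_s[(\kappa-\Delta x)^+]$ for a put (where the exponential weight is a bounded constant). I then use the elementary bounds $(\Delta x-\kappa)^+\le\lvert\Delta x\rvert$ and $(\Delta x-\kappa)^+\le(\Delta x)^2/\lvert\kappa\rvert$; after splitting the call expectation at $\{\Delta x\le1\}$ (bounded weight) and $\{\Delta x>1\}$ (a large move, of order $T\le T/\lvert\kappa\rvert$ by the tail estimate below), the first bound yields the factor $\sqrt T$ and the second the factor $T/\lvert\kappa\rvert$, hence $\sqrt T\wedge(T/\lvert k-x_s\rvert)$. \textbf{In the tails} ($\lvert k-x_s\rvert>1$) the payoff is nonzero only after an excursion of size $>1$; for small $T$ the diffusive contribution to such an excursion is super-polynomially small, so the estimate is driven by a single large jump and carries a prefactor $T$ times a tail integral of $\la$. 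For the call I would bound $\int_{z>\kappa}e^{z}\la(s,z)\,dz\le e^{-2\kappa}\int e^{3\lvert z\rvert}\la\,dz$, which after the $e^{x_s}$ factor produces $C_sTe^{-k}$; for the put, $\int_{z<\kappa}\la(s,z)\,dz\le e^{3\kappa}\int e^{3\lvert z\rvert}\la\,dz$, and combining with the payoff factor $e^k$ gives $e^{k+3\kappa}=e^{4k-3x_s}$, where the constraint $k<x_s-1$ makes $e^{k-3x_s}$ bounded uniformly in $k$, turning the naive $e^{4k}$ into the stated $C_sTe^{3k}$ with the $x_s$-dependence absorbed into a single c\`adl\`ag $C_s$.

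For the Lipschitz-type bound \eqref{bounds_2}, take two puts $k_1<k_2<x_s$. The payoff difference $(e^{k_2}-e^{y})^+-(e^{k_1}-e^{y})^+$ is supported on $\{y<k_2\}$ and bounded there by $e^{k_2}-e^{k_1}=\lvert e^{k_1}-e^{k_2}\rvert$, so
\[ \lvert O_{s,T}(k_2)-O_{s,T}(k_1)\rvert\le\lvert e^{k_1}-e^{k_2}\rvert\,\E_s[\bone_{\{x_{s+T}<k_2\}}], \]
and symmetrically with $\bone_{\{x_{s+T}>k_2\}}$ in the two-call case $k_1>k_2>x_s$. Writing $a=\lvert k_2-x_s\rvert$, Markov's inequality with the second and fourth conditional moments of $\Delta x$ gives $\E_s[\bone_{\{\lvert\Delta x\rvert>a\}}]\lesssim(T/a^2)\wedge(T/a^4)\wedge1$, which is exactly the bracketed factor in \eqref{bounds_2}.

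The main obstacle is the rigorous treatment of the tail regime of \eqref{bounds_1}: I must show, uniformly in $s\in[\underline t,\overline t]$ and in $k$ within each regime, that crossing a level at distance $>1$ in time $T$ has probability of order $T$ governed by the jump intensity and not by the diffusion. I would obtain this through the standard one-large-jump decomposition of semimartingale increments (separating the continuous and compensated small-jump part, whose exponential moments are controlled and whose probability of a move $>1$ decays faster than any power of $T$, from the single-jump contribution whose law is given by the L\'evy system $\la(s,z)\,dz\,ds$), using the $e^{3\lvert z\rvert}$-integrability of \eqref{a3:1} as the sole quantitative handle on the jump tail and ensuring that all constants assemble into one c\`adl\`ag process $C_s$ valid simultaneously for every admissible $k$.
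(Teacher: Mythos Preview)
The paper does not actually prove this lemma: it simply records it as Lemma~3 of \cite{T21} and refers the reader there. Your proposal, by contrast, supplies a self-contained argument, and the route you outline is correct.

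The proof of \eqref{bounds_2} via the pointwise bound $\lvert(e^{k_2}-e^y)^+-(e^{k_1}-e^y)^+\rvert\le\lvert e^{k_1}-e^{k_2}\rvert\,\bone_{\{\lvert y-x_s\rvert>\lvert k_2-x_s\rvert\}}$ followed by Markov's inequality with the second and fourth conditional moments of $\Delta x$ is exactly the right mechanism, and your conditional moment bounds $\E_s[(\Delta x)^2]+\E_s[(\Delta x)^4]\lesssim T$ follow from the assumptions as you indicate. For the near-the-money part of \eqref{bounds_1}, the linearization $(e^a-e^b)^+\le e^{a\vee b}(a-b)^+$ together with the two elementary inequalities $(\Delta x-\kappa)^+\le\lvert\Delta x\rvert$ and $(\Delta x-\kappa)^+\le(\Delta x)^2/\lvert\kappa\rvert$ cleanly produces the $\sqrt{T}\wedge(T/\lvert k-x_s\rvert)$ factor. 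You rightly isolate the tail regime of \eqref{bounds_1} as the only genuinely delicate step, and the one-large-jump decomposition you describe (continuous plus small-jump part has a super-polynomially small probability of exceeding level $1$; the large-jump contribution is governed by $T\int e^{3\lvert z\rvert}\la$ via the L\'evy system) is precisely what is needed, with \eqref{a3:1} supplying the exponential tail control. One detail to make explicit: in the near-the-money call case, the $\{\Delta x>1\}$ piece carries the unbounded weight $e^{\Delta x}$, so its $O(T)$ bound is \emph{also} an instance of the one-large-jump argument rather than a pure moment estimate; you effectively acknowledge this by forward-referencing the tail estimate, but it is worth stating.
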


\bibliographystyle{abbrvnat}
\bibliography{ovv}

\end{document}